\newcommand{\R}{\mathbb{R}}
\newcommand{\B}{\mathcal{B}}
\newcommand{\F}{\mathcal{F}}
\newcommand{\G}{\mathcal{G}}
\newcommand{\K}{\mathcal{K}}
\newcommand{\Q}{\mathcal{Q}}
\newcommand{\Kinf}{\mathcal{K}_{\infty}}
\newcommand{\Keinf}{\mathcal{K}^{\rm e}_{\infty}}
\begin{document}

\title{
Control Barrier Functionals: \\ Safety-critical Control for Time Delay Systems
}

\author[1]{Adam K. Kiss*}

\author[2]{Tamas G. Molnar
}

\author[2]{Aaron D. Ames
}
\author[3]{G\'abor Orosz
}

\authormark{AK KISS \textsc{et al}}

\address[1]{\orgdiv{MTA-BME Lend{\"{u}}let Machine Tool Vibration Research Group, Department of Applied Mechanics}, \orgname{Budapest University of Technology and Economics}, \orgaddress{\state{Budapest 1111}, \country{Hungary}}}

\address[2]{\orgdiv{Department of Mechanical and Civil Engineering}, \orgname{California Institute of Technology}, \orgaddress{\state{Pasadena, CA 91125}, \country{USA}}}

\address[3]{\orgdiv{Department of Mechanical Engineering and Department of Civil and Environmental Engineering}, \orgname{University of Michigan}, \orgaddress{\state{Ann Arbor, MI 48109}, \country{USA}}}

\corres{*Adam K. Kiss, MTA-BME Lend{\"{u}}let Machine Tool Vibration Research Group, Department of Applied Mechanics, Budapest University of Technology and Economics, Budapest 1111, Hungary. \email{kiss\_a@mm.bme.hu}}


\abstract[Summary]{

This work presents a theoretical framework for the safety-critical control of time delay systems.
The theory of control barrier functions, that provides formal safety guarantees for delay-free systems, is extended to systems with state delay.
The notion of control barrier functionals is introduced to attain formal safety guarantees, by enforcing the forward invariance of safe sets defined in the infinite dimensional state space.
The proposed framework is able to handle multiple delays and distributed delays both in the dynamics and in the safety condition, and provides an affine constraint on the control input that yields provable safety.
This constraint can be incorporated into optimization problems to synthesize pointwise optimal and provable safe controllers.
The applicability of the proposed method is demonstrated by numerical simulation examples.
}

\keywords{Control of nonlinear systems, Safety-critical control, Delay systems, Infinite dimensional systems}

\jnlcitation{\cname{%
\author{A. K. Kiss}, 
\author{T. G. Molnar}, 
\author{A. D. Ames} and 
\author{G. Orosz}
} (\cyear{2021}), 
\ctitle{
Control Barrier Functionals: Safety-critical Control for Time Delay Systems
}, \cjournal{Int J Robust Nonlinear Control}, \cvol{2021;XX:X-X}.}
\maketitle

\section{Introduction}

In modern control systems, safety is a crucial factor -- often a necessary precursor to other control objectives including: performance, efficiency and sustainability. This motivates the importance of developing safety-critical control methods. 
The application domains thereof are wide-spread, from self-driving autonomous vehicles \cite{Niletal2016}, through robotic systems \cite{tordesillas2019faster, Kousik2020, Nubert2020} to human-robot collaboration \cite{Zanchettin2016, Landi2019, Singletary2019} where safety plays a key role for reliable autonomy or sustainable operation. The importance of safety reaches even beyond engineering applications: for example,
the requirement of safety appears in epidemiological models describing pandemics \cite{ames2020safety,Molnar2021lcss} and in other biological applications. 

To formally address safety in dynamical and control systems, one can define a safe set over the state space, wherein safety can be framed  as the forward invariance of that set, i.e., one requires that the system evolves within the safe set for all time.
Rigorous guarantees of safety necessitate a theory for ensuring forward set invariance.
Barrier functions (or safety functions) have been established to certify set invariance in dynamical systems, while the theory of control barrier functions (CBFs) enables safe controller synthesis in control systems.
The framework of CBFs was first introduced in \cite{AmesCDC2014} and later refined in \cite{AmesXuGriTab2017}.
A comprehensive review of safety-critical control can be found in \cite{ames2019control} and the references therein.

While most works in safety-critical control are applied to delay-free systems, time delays often arise in many applications. 
For example,
human-machine interactions involve the reflex delay of the human operators,
models of vehicular traffic contain the reaction time of the drivers \cite{Ji2021}, 
wheel-shimmy motion can occur on vehicles due to the elastic contact between the tire and the road that can be modeled as distributed delay \cite{takacs2009delay},
manufacturing processes including metal cutting may suffer from vibrations due to a delayed regenerative effect of the chip formation \cite{MunoaCIRPKeynote}, 
hydraulic systems showcase time delay caused by wave propagation in pipes \cite{kadar2021time},
and epidemiological models contain delays due to the incubation period of infectious diseases \cite{rost2008seir,casella2021covid19}.
Time delay also plays important role in population dynamics \cite{kuang1993delay}, neural networks \cite{orosz2010controlling}, brain dynamics \cite{stepan2009delay_brain}, the human sensory system \cite{stepan2009delay,insperger2013acceleration} and robotic systems \cite{stepan2001vibrations}. 
Generally speaking, delays can enter a control system in two different ways: in the control input or in the state. In both cases, time delays may render the system unsafe if controllers are designed without considering the delay. 

When delay appears in the control input the dynamics is often formulated as:
\begin{equation}
\dot{x}(t) = f \big( x(t) \big) + g \big( x(t) \big) u(t-\tau).
\end{equation}
In the case of input delay, the input affects the system after a delay period, hence it needs to be taken into account what will happen to the system in the future before the input becomes effective.
Therefore, the idea of predictor feedback \cite{Krsticbook2008,BekKrs2013,Karafyllis2017,michiels2007stability} is often used to eliminate the effect of the delay by predicting the future state from the actual state and the input history.
Comprehensive literature about safety-critical control with input delay for various applications can be found in \cite{necmiye2020, singletary2020control} for discrete time, in \cite{Jankovic2018, ames2020safety, Molnar2021lcss, Molnar2021tcst, molnar2022issf} for continuous time, while the works in \cite{abel2020constrained, Abel2021} include time-varying and multiple input delays. 

When delay appears in the state, a typical example for this class of systems is:
\begin{equation}\label{eq:statedelay_example}
\dot{x}(t) = f \big( x(t), x(t-\tau) \big) + g \big( x(t), x(t-\tau) \big) u(t).
\end{equation}
In the case of state delay, safety-critical control has not yet been fully addressed to the best of our knowledge.
Therefore, this paper is intended to tackle this problem.
We seek to find controllers for time delay systems like \eqref{eq:statedelay_example} such that safety is maintained.
Specifically, we seek to keep a certain scalar safety measure $h$ positive, wherein the safety condition may also depend on delayed states (as it was first proposed in \cite{prajna2005methods}).
For example, we require the following to hold for the system to be considered safe:
\begin{equation}\label{eq:safetycondition_example}
h \big( x(t), x(t-\tau) \big) \geq 0, \quad \forall t \geq 0.
\end{equation}

The main challenge of controlling systems with state delay originates from the infinite dimensional nature of delayed dynamics.
Namely, the state of the system is a function over the delay period, that implies an infinite dimensional state space similar to partial differential equations (PDEs).
As such, time delay systems are often described by functional differential equations (FDEs) \cite{krasovskii1963stability,hale1977theory,Stepan1989,kolmanovskii2013introduction,diekmann2012delay}.
Since the theory of FDEs relies on similar concepts to that of ordinary differential equations (ODEs), they are also viewed as ``abstract ODEs''  \cite{breda2014stability}, which describe the evolution of the state in the infinite dimensional state space.
Still, the mathematical treatment of FDEs requires special care, especially for ensuring formal safety guarantees.

Since the state of time delay systems is given by a function over the time history, scalar safety measures can be constructed as functionals of the state.
There exist a few instances of using functionals in the literature in the context of safety.
Safety verification for PDEs using barrier functionals can be found in \cite{ahmadi2017safety} and state-constrained control considering integral barrier Lyapunov functionals are discussed in \cite{tee2012control,liu2016adaptive}.
For autonomous time delay systems without control, \cite{OroAme2019} introduced the concept of safety functionals, which has been investigated further in \cite{Kiss2021} by means of discretization.
These works, however, do not address control systems with time delay.


\subsection{Contributions}

The main contribution of this research is a theoretical framework that allows control synthesis with formal safety guarantees in control systems with state delay.
While this includes, for example, systems of the form \eqref{eq:statedelay_example} with safety requirement \eqref{eq:safetycondition_example}, we discuss a much wider class of time delay systems and safety conditions.
Specifically, we build on the notions of safety functionals \cite{OroAme2019} and control barrier functions \cite{AmesXuGriTab2017} to introduce {\em control barrier functionals} as tools for safety-critical controller synthesis.
We use the theory of retarded and neutral functional differential equations to prove the underlying formal safety guarantees.

We remark that a few recent papers have also approached this problem parallel to our work.
Namely, \cite{liu2021safety} considers delays with disturbances while \cite{ren2021razumikhin,ren2022razumikhin} investigate the combination of stability and safety by the application of Razumikhin- and Krasovskii-type control Lyapunov and control barrier functionals.
Although these recent works share some of the ideas presented in this paper, we establish a comprehensive in-depth study that is not covered by previous works, including a wider class of control barrier functionals, an exhaustive discussion on how to calculate the derivatives of these functionals, the safety of neutral FDEs, the notion of relative degree for delay systems, and multiple application examples.
Meanwhile, we do not address questions related to stability or disturbances. 


The rest of the paper is organized as follows. 
In Section~\ref{sec:delay_free}, safety is revisited for delay-free dynamical and control systems through the notions of safety functions and control barrier functions, respectively.
Then, Sections~\ref{sec:time_delay_sys} and~\ref{sec:safety_crit_cont} present the major contributions of this work: formal guarantees of safety for time delay systems.
Section~\ref{sec:time_delay_sys} establishes the theoretical foundations of safety functionals that certify the safety of autonomous delayed dynamical systems, while Section~\ref{sec:safety_crit_cont} discusses safety-critical control with delay by means of control barrier functionals.
In Section~\ref{sec:examp}, we demonstrate safety-critical control on illustrative examples, while Section~\ref{sec:predatorprey} presents a more practical case study through the regulated delayed predator-prey problem.
Finally, we conclude our results and discuss future research directions in Section \ref{sec:conclusion}.

\section{Safety of delay-free systems}\label{sec:delay_free}

In this section, we revisit safety certification for delay-free dynamical systems and safety-critical control for delay-free control systems, that are described by ordinary differential equations (ODEs).
Specifically, we focus on the notions of {\em safety functions} and {\em control barrier functions}.
Then, in Section~\ref{sec:time_delay_sys}, we extend these frameworks to time delay systems.

\subsection{Dynamical Systems}\label{sec:delay_free_BF}

Consider the dynamical system described by the ODE:
\begin{equation}\label{eq:ODE}
\dot{x}(t) = f\big(x(t)\big),
\end{equation}
where dot represents derivative with respect to time $t$, ${x \in \R^n}$ is the state variable, and ${f \colon \R^n \to \R^n}$ is a locally Lipschitz continuous function.
Given an initial condition ${x(0) \in \R^n}$, this system has a unique solution over ${t \in I(x(0))}$ with an interval of existence $I(x(0)) \subseteq \R$.
For simplicity of exposition, throughout this paper we assume ${I(x(0)) = \R_{\geq 0}}$, i.e., solutions exist ${\forall t \geq 0}$.

We consider system \eqref{eq:ODE} safe when the solution $x(t)$ evolves within a {\em safe set} ${S \subset \R^n}$, as given by the following definition.
\begin{definition}[\textbf{Safety and Forward Invariance}]
System \eqref{eq:ODE} is safe w.r.t. set ${S \subset \R^n}$, if $S$ is forward invariant w.r.t. \eqref{eq:ODE} such that ${x(0) \in S \implies x(t) \in S}$, ${\forall t \geq 0}$ for the solution of \eqref{eq:ODE}.
\end{definition}
Specifically, we consider set $S$ to be the 0-superlevel set of a continuously differentiable function ${h \colon \R^n \to \R}$, that is:
\begin{equation}\label{eq:safesetcond}
S   = \{x \in \R^n : h(x) \geq 0 \}.
\end{equation}
In this case, safety functions allow us to certify the safety of~\eqref{eq:ODE}.

\begin{definition}[\textbf{Safety Function}]\label{def:SF_ODE}
A continuously differentiable function ${h \colon \R^n \to \R}$ is a \textbf{safety function} for \eqref{eq:ODE} on $S$ defined by~\eqref{eq:safesetcond} if there exists ${\alpha \in \Keinf}$ (see footnote\footnote{Function ${\alpha \colon \R \to \R}$ is of extended class-$\Kinf$, denoted as ${\alpha \in \Keinf}$, if $\alpha$ is continuous, monotonically increasing, ${\alpha(0)=0}$ and ${\lim_{r \to \pm \infty} \alpha(r) = \pm \infty}$.})
such that ${\forall x \in \R^n}$:
\begin{equation}\label{eq:safecond}
\dot{h}(x) \geq -\alpha\big(h(x)\big),
\end{equation}
where ${\dot{h}(x) = \nabla h(x)  f(x) = L_f h(x)}$ is the derivative of $h$ along \eqref{eq:ODE} that is equal to the Lie derivative $L_f h$ of $h$ along $f$. Here ${\nabla h(x) \in \mathbb{R}^{1 \times n} }$ is a row vector while ${f(x) \in \mathbb{R}^{n \times 1} }$ is a column vector, and ${\nabla h(x)  f(x)}$ denotes the scalar product of these vectors. 
\end{definition}

Further technical details with discussion about $\alpha$ can be found in \cite{Konda2021}.
With this definition, the main result of~\cite{AmesXuGriTab2017} establishes the safety of dynamical systems.

\begin{theorem}\cite{AmesXuGriTab2017}\label{thm:safeODE}
\textit{
Set $S$ in \eqref{eq:safesetcond} is forward invariant w.r.t. \eqref{eq:ODE} if $h$ is a safety function for \eqref{eq:ODE} on $S$, i.e., \eqref{eq:safecond} is satisfied.
}
\end{theorem}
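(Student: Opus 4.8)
The plan is to reduce the claim to a one-dimensional comparison argument along solutions. First I would fix an initial condition $x(0) \in S$, let $x(t)$ denote the corresponding solution of~\eqref{eq:ODE}, and track the scalar quantity $\lambda(t) := h\big(x(t)\big)$. Differentiating along the flow via the chain rule gives $\dot{\lambda}(t) = \nabla h\big(x(t)\big) f\big(x(t)\big) = \dot{h}\big(x(t)\big)$, so the defining inequality~\eqref{eq:safecond} of a safety function translates into the scalar differential inequality $\dot{\lambda}(t) \geq -\alpha\big(\lambda(t)\big)$ with $\lambda(0) = h\big(x(0)\big) \geq 0$. This converts the geometric statement about forward invariance of $S$ in $\R^n$ into a question about the sign of a scalar function obeying a differential inequality.

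Next I would introduce the comparison initial value problem $\dot{\mu}(t) = -\alpha\big(\mu(t)\big)$ with $\mu(0) = \lambda(0) \geq 0$, and invoke the comparison lemma to obtain $\lambda(t) \geq \mu(t)$ for all $t \geq 0$. It then remains to show that the comparison trajectory stays nonnegative, i.e.\ $\mu(t) \geq 0$. This is where membership $\alpha \in \Keinf$ does the work: since $\alpha(0) = 0$, the constant function $\mu \equiv 0$ solves the comparison equation, and because $\alpha$ is monotonically increasing the scalar field $-\alpha$ is nonpositive for $\mu > 0$ and nonnegative for $\mu < 0$, so $\mu = 0$ acts as a barrier that a trajectory starting at $\mu(0) \geq 0$ cannot cross downward. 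Chaining the two inequalities yields $h\big(x(t)\big) = \lambda(t) \geq \mu(t) \geq 0$, hence $x(t) \in S$ for all $t \geq 0$, which is exactly forward invariance.

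The main obstacle I anticipate is the regularity of $\alpha$. The comparison lemma in its textbook form assumes the right-hand side of the comparison equation is locally Lipschitz so that its solution is unique, whereas $\alpha$ is only assumed continuous and strictly increasing. I would address this either by working with the maximal solution of the comparison equation and exploiting that the relevant equilibrium $\mu = 0$ is approached monotonically, so that uniqueness from above is not actually needed for the inequality I want, or by bypassing the comparison lemma entirely with a direct first-crossing argument: if forward invariance failed there would be a least time $t^\star$ at which $h\big(x(t^\star)\big) = 0$ with $h$ becoming negative immediately afterwards, yet~\eqref{eq:safecond} forces $\dot{h}\big(x(t^\star)\big) \geq -\alpha(0) = 0$, contradicting an immediate decrease. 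This boundary estimate is precisely a Nagumo-type subtangentiality condition, and I would emphasize that the value of the $\Keinf$ formulation is that it delivers this nonnegative derivative on all of $\{h = 0\}$ while still permitting $h$ to vanish smoothly as the solution approaches $\partial S$.
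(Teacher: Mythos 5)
Your proposal is correct and follows essentially the same route as the paper: reduce to the scalar differential inequality for $h(x(t))$, compare against the initial value problem $\dot{y}=-\alpha(y)$, $y(0)=h(x(0))$, and use that the comparison trajectory starting from a nonnegative value stays nonnegative because $\alpha(0)=0$. The regularity concern you raise about $\alpha$ being merely continuous is exactly the technical nuance the paper delegates to the cited reference \cite{Konda2021}, so your anticipation of it is well placed; just note that your fallback ``first-crossing'' argument is weaker than the comparison route, since $\dot{h}(x(t^\star))\geq 0$ at a single touching time does not by itself rule out a subsequent decrease.
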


\begin{proof}[Proof of Theorem~\ref{thm:safeODE}]
The proof is given by the comparison lemma \cite{Khalil2002}, and for further technical nuances, please refer to \cite{Konda2021}.
To set up the comparison lemma, consider the scalar initial value problem (with ${y \in \R}$):
\begin{equation}\label{eq:proof08}
\dot{y}(t) = -\alpha \big(y(t)\big), \qquad y(0) = h(x(0)),
\end{equation}
with the solution (note that~\eqref{eq:proof08} has a unique solution because $\alpha$ is an extended class-$\Kinf$ function \cite{Konda2021}):
\begin{equation}\label{eq:proof09}
y(t) = \beta \big(h(x(0)),t\big),
\end{equation}
for ${t \geq 0}$, where ${\beta \in \Keinf \mathcal{L}}$ (see footnote\footnote{Function ${\beta \colon \R \times \R_{\geq 0} \to \R}$ is of class-${\Keinf \mathcal{L}}$, denoted as ${\beta \in \Keinf \mathcal{L}}$, if ${\beta(.,s) \in \Keinf}$ for any ${s \in \R_{\geq 0}}$, and ${|\beta(r,.)}|$ is decreasing and ${\lim_{s \to \infty} \beta(r,s) = 0}$ for any ${r \in R}$.}).
By the comparison lemma for \eqref{eq:safecond} and \eqref{eq:proof08}, we obtain:
\begin{equation}\label{eq:proof010}
h(x(t)) \geq \beta \big(h(x(0)),t\big), \quad \forall t \geq 0.
\end{equation}
Therefore, $h(x(0)) \geq 0 \implies h(x(t)) \geq 0$, $\forall t \geq 0$, that is, $S$ is forward invariant. This completes the proof.
\end{proof}

Safety functions and Theorem~\ref{thm:safeODE} provide a useful tool for certifying the safety of dynamical systems: one needs to verify that~\eqref{eq:safecond} holds.
A similar concept can be used in control systems to design controllers that enforce the safety of the closed-loop dynamics, which is discussed next.

\subsection{Safety-critical Control}\label{sec:delay_free_CBF}

Now consider the affine control system:
\begin{equation}\label{eq:affine_ODE}
\dot{x}(t) = f\big(x(t)\big) + g\big(x(t)\big)u(t),
\end{equation}
with state $x \in \R^n$, control input $u \in \R^m$, and locally Lipschitz continuous functions $f \colon \R^n \to \R^n$ and $g \colon \R^{n \times m} \to \R^n$.
We seek to find a locally Lipschitz continuous controller $k \colon \R^n \to \R^m$, ${u=k(x)}$ to enforce that the closed-loop system:
\begin{equation}\label{eq:closedloop}
\dot{x}(t) = f\big(x(t)\big) + g\big(x(t)\big) k\big(x(t)\big),
\end{equation}
is safe w.r.t. the set $S$ in~\eqref{eq:safesetcond}, that is, $S$ is forward invariant w.r.t.~\eqref{eq:closedloop}.
Note that the local Lipschitz continuity of $f$, $g$ and $k$ ensures that~\eqref{eq:closedloop} has a unique solution $x(t)$ for any initial condition $x(0) \in \R^n$, and for simplicity we assume that the interval of existence is $t \geq 0$.
The safety requirement motivates the following definition.

\begin{definition}[\textbf{Control Barrier Function, CBF~\cite{AmesXuGriTab2017}}]\label{def:CBF}
A continuously differentiable function ${h \colon \R^n \to \R}$ is a \textbf{control barrier function} (CBF) (see footnote\footnote{In the literature, the term control barrier function often used for functions that go to infinity at the boundary of the safe set, whereas functions that are zero at the boundary shall be called control safety functions.
However, these terminologies are often used interchangeably in the literature, hence we use the more popular term, CBF.})
for \eqref{eq:affine_ODE} on $S$ defined by~\eqref{eq:safesetcond},
if there exists $\alpha \in \Keinf$ such that $\forall x \in \R^n$:
\begin{equation}\label{eq:CBF_condition}
\sup_{u\in \R^m} \big[\dot{h}(x,u) \big] > - \alpha\big( h(x) \big),
\end{equation}
where
\begin{equation}\label{eq:hdot}
\dot{h}(x,u)= \underbrace{\nabla h(x)  f(x)}_{L_f h(x)} + \underbrace{\nabla h(x)  g(x)}_{L _g  h(x)} \, u
\end{equation}
is the derivative of $h$ along \eqref{eq:affine_ODE}, given by the Lie derivatives $L_f h$ and $L_g h$ of $h$ along $f$ and $g$.
\end{definition}
With this definition, control systems can be rendered safe with the following extension of Theorem~\ref{thm:safeODE}.
\begin{theorem}\cite{AmesXuGriTab2017}\label{thm:safety}
\textit{
If $h$ is a CBF for \eqref{eq:affine_ODE} on $S$ defined by~\eqref{eq:safesetcond}, then any locally Lipschitz continuous controller $k \colon \R^n \to \R^m$, $u=k(x)$ satisfying:
\begin{equation}\label{eq:safety_cond_control}
\dot{h}\big(x,k(x)\big) \geq - \alpha\big(h(x)\big)
\end{equation}
${\forall x \in S}$ renders set $S$ forward invariant w.r.t.~\eqref{eq:closedloop}.
}
\end{theorem}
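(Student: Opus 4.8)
The plan is to reduce this control statement to the dynamical-systems result of Theorem~\ref{thm:safeODE}. Once the controller $u = k(x)$ is substituted, the closed-loop system \eqref{eq:closedloop} becomes an autonomous ODE $\dot{x}(t) = \tilde{f}\big(x(t)\big)$ with $\tilde{f}(x) = f(x) + g(x)k(x)$. First I would observe that $\tilde{f}$ is locally Lipschitz continuous, since $f$, $g$, and $k$ all are; hence \eqref{eq:closedloop} admits a unique solution for every initial condition and the entire setting of Theorem~\ref{thm:safeODE} carries over to this closed-loop vector field.

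The next step is to compute the derivative of $h$ along the closed-loop trajectories. By the chain rule this equals $\nabla h(x)\,\tilde{f}(x) = L_f h(x) + L_g h(x)\,k(x) = \dot{h}\big(x, k(x)\big)$, which is exactly the quantity constrained in \eqref{eq:safety_cond_control}. The hypothesis thus guarantees $\dot{h}\big(x, k(x)\big) \geq -\alpha\big(h(x)\big)$, i.e., $h$ satisfies the safety-function inequality \eqref{eq:safecond} of Definition~\ref{def:SF_ODE} for the closed-loop field. In words, $h$ becomes a safety function for \eqref{eq:closedloop}, and Theorem~\ref{thm:safeODE} then delivers forward invariance of $S$.

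The one subtlety -- and the main obstacle -- is that \eqref{eq:safety_cond_control} is assumed only on $S$, whereas Definition~\ref{def:SF_ODE} requires the inequality on all of $\R^n$. To bridge this gap I would unfold the comparison-lemma argument used to prove Theorem~\ref{thm:safeODE} rather than invoke its conclusion as a black box. For a trajectory with $x(0) \in S$, set $m(t) = h\big(x(t)\big)$; as long as $x(t)$ remains in $S$ the inequality $\dot{m}(t) \geq -\alpha\big(m(t)\big)$ holds, so comparison against the scalar system $\dot{y} = -\alpha(y)$ with $y(0) = m(0) \geq 0$ yields $m(t) \geq \beta\big(h(x(0)),t\big) \geq 0$ on that interval, using $\beta(\cdot,t) \in \Keinf$. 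A first-exit-time argument then closes the proof: were the trajectory to leave $S$ at some first time $t_1$, the inequality would hold throughout $[0,t_1]$ and force $h\big(x(t_1)\big) \geq 0$, contradicting $h\big(x(t_1)\big) < 0$. Hence the trajectory never exits $S$, which is precisely why the CBF condition is only required on the safe set.
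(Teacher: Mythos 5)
Your proposal is correct and follows essentially the same route as the paper: substitute the controller to obtain the locally Lipschitz closed-loop field $\tilde f = f + gk$, identify $\dot h(x,k(x))$ as the derivative of $h$ along \eqref{eq:closedloop}, and invoke the comparison-lemma argument of Theorem~\ref{thm:safeODE}. The one place you go beyond the paper's two-line proof is in explicitly patching the mismatch between the hypothesis (inequality only on $S$) and Definition~\ref{def:SF_ODE} (inequality on all of $\R^n$) via a first-exit-time argument; that refinement is welcome, with the minor caveat that at the first exit time continuity gives $h(x(t_1))=0$ rather than $h(x(t_1))<0$, so the contradiction should be phrased against times just beyond $t_1$ (this is exactly the technical nuance the paper defers to the cited reference).
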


\begin{proof}[Proof\cite{AmesXuGriTab2017}]
Definition~\ref{def:CBF} ensures that controller $k$ exists. Then, considering the closed-loop dynamics in~\eqref{eq:closedloop}, the set $S$ is forward invariant according to Theorem~\ref{thm:safeODE}.
\end{proof}

This result provides systematic means to safety-critical controller synthesis: one needs to satisfy condition~\eqref{eq:safety_cond_control} when designing the controller.
Condition~\eqref{eq:safety_cond_control} can be incorporated into optimization problems as constraint to find pointwise optimal safety-critical controllers.
For example, one can modify a desired controller $k_{\rm des}$ in a minimally invasive fashion to a safe controller $k$ by solving a quadratic program, as stated formally below.
\begin{corollary}
\textit{
Given a CBF $h$ and a locally Lipschitz continuous desired controller ${k_{\rm des}\colon \R^n \to \R^m}$, $u_{\rm des}=k_{\rm des}(x)$, the following quadratic program (QP) yields a controller ${k\colon \R^n \to \R^m}$, $u=k(x)$ that renders set $S$ in~\eqref{eq:safesetcond}  forward invariant w.r.t.~\eqref{eq:closedloop}:
\begin{equation}\label{eq:qp_min_norm}
\begin{split}
k(x) =
\underset{u \in \R^m}{\operatorname{argmin}}
& \quad \frac{1}{2} \|u - k_{\rm des}(x)\|_2^2  \\
\mathrm{s.t.} & \quad \dot{h}(x,u) \geq - \alpha\big(h(x)\big).
\end{split}
\end{equation}
Furthermore, the explicit solution of~\eqref{eq:qp_min_norm} can be found by the Karush–Kuhn–Tucker (KKT) \cite{Boyd2004} conditions as \cite{Molnar2021tcst}:
\begin{equation}
k(x)= 
\begin{cases}
k_{\rm des}(x) & {\rm if} \; \varphi(x) \geq 0,\\
k_{\rm des}(x)-\frac{\varphi(x) \varphi_0^\top(x)}{\varphi_0 (x)\varphi_0^\top(x)}  & {\rm otherwise},
\end{cases}
\end{equation}
where $\varphi(x) =  L_f h(x) + L_g h(x)\, k_{\rm des}(x) + \alpha \big( h(x) \big)$
and $\varphi_0 (x)= L_g h(x)$.
}
\end{corollary}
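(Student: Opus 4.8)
The plan is to treat the corollary's two assertions separately: first, that the QP in~\eqref{eq:qp_min_norm} is well posed and returns a controller rendering $S$ forward invariant, and second, that its solution admits the stated closed form. The crucial observation is that the QP constraint is \emph{verbatim} the safety condition~\eqref{eq:safety_cond_control} of Theorem~\ref{thm:safety}. Hence, once I exhibit a well-defined, locally Lipschitz minimizer $k(x)$, forward invariance of $S$ follows immediately by invoking Theorem~\ref{thm:safety}, and no further dynamical argument is needed.

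For well-posedness I would argue as follows. Writing the constraint as $L_f h(x) + L_g h(x)\,u + \alpha\big(h(x)\big) \ge 0$, the CBF condition~\eqref{eq:CBF_condition} guarantees that for every $x$ there is a $u$ satisfying it strictly, so the feasible set is a nonempty closed half-space (or all of $\R^m$). Since the objective $\tfrac{1}{2}\|u-k_{\rm des}(x)\|_2^2$ is strictly convex and coercive in $u$ while the constraint is affine in $u$, the minimizer exists and is unique for each $x$; this defines $k(x)$ unambiguously, and by construction it satisfies~\eqref{eq:safety_cond_control} pointwise.

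To obtain the explicit solution I would form the Lagrangian $\mathcal{L}(u,\lambda)=\tfrac{1}{2}\|u-k_{\rm des}(x)\|_2^2 - \lambda\big(L_f h(x)+L_g h(x)\,u+\alpha(h(x))\big)$ with multiplier $\lambda\ge 0$ and impose the KKT conditions. Stationarity $\nabla_u\mathcal{L}=0$ gives $u=k_{\rm des}(x)+\lambda\,\varphi_0^\top(x)$, and complementary slackness splits into two cases. If $\lambda=0$, primal feasibility demands $\varphi(x)\ge 0$ and yields $k(x)=k_{\rm des}(x)$. If $\lambda>0$, the constraint is active; substituting the stationarity expression into the active-constraint equation $\varphi(x)+\lambda\,\varphi_0(x)\varphi_0^\top(x)=0$ solves $\lambda=-\varphi(x)/\big(\varphi_0(x)\varphi_0^\top(x)\big)$, which is positive precisely when $\varphi(x)<0$, and substituting back produces the projection formula $k(x)=k_{\rm des}(x)-\varphi(x)\varphi_0^\top(x)/\big(\varphi_0(x)\varphi_0^\top(x)\big)$. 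This reproduces the claimed expression.

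The step I expect to be the main obstacle is verifying that $k(x)$ is \emph{locally Lipschitz continuous}, as Theorem~\ref{thm:safety} requires. Continuity across the switching surface $\{\varphi(x)=0\}$ is immediate, since the second branch reduces to $k_{\rm des}(x)$ there. The delicate point is the denominator $\varphi_0(x)\varphi_0^\top(x)=\|L_g h(x)\|_2^2$, which could in principle vanish. However, wherever $L_g h(x)=0$ the CBF condition~\eqref{eq:CBF_condition} forces $L_f h(x)+\alpha(h(x))>0$, i.e.\ $\varphi(x)>0$ with a strict margin, so such points lie in the interior of the inactive branch $k=k_{\rm des}$; hence the active branch is only ever invoked where the denominator is bounded away from zero. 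Combining this with the local Lipschitzness of $f$, $g$, $\nabla h$, $\alpha$, and $k_{\rm des}$, standard estimates for the projection onto a smoothly varying half-space then yield local Lipschitz continuity of $k$, closing the gap and completing the proof.
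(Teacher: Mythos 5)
Your proposal is correct and follows essentially the same route as the paper: the constraint of~\eqref{eq:qp_min_norm} is recognized as the safety condition~\eqref{eq:safety_cond_control} of Theorem~\ref{thm:safety}, and the closed form is obtained from the KKT conditions exactly as in the paper's derivation (carried out in Appendix~\ref{sec:appdx_KKT} for the delayed analogue), including the key observation that $L_g h(x)=0$ forces $\varphi(x)>0$, so the denominator never vanishes on the active branch. Your added discussion of the local Lipschitz continuity of $k$ goes slightly beyond what the paper records explicitly (it defers this to the cited references), but it is consistent with the standard argument and does not change the approach.
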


We remark that, with some extra care, the CBF condition~\eqref{eq:CBF_condition} could be prescribed for ${\forall x \in \R^n}$ (rather than for ${\forall x \in S}$ only). This results in the attractivity of $S$, i.e., CBFs render $S$ stabilizable, allowing the system to return to safe states from unsafe ones~\cite{xu2015robustness}.
Furthermore,~\eqref{eq:CBF_condition} is equivalent to:
\begin{equation}\label{eq:CBF_condition_Lgh}
L_g h(x) = 0 \implies L_f h(x) > -\alpha\big(h(x)\big).
\end{equation}
When ${L_g h(x) = 0}$, the derivative of $h$ with respect to time is not affected by the control input $u$ according to~\eqref{eq:hdot}, hence the corresponding uncontrolled system must be safe on its own.
One may sufficiently satisfy~\eqref{eq:CBF_condition_Lgh} by requiring that $L_g h(x)$ is never zero.
Then, the first derivative of $h$ is always affected by the control input $u$, which is referred to as $h$ has {\em relative degree}~$1$.
Otherwise, higher derivatives of $h$ may be affected by $u$ that leads to higher relative degrees, given by the following definition.
\begin{definition}[\textbf{Relative Degree}\cite{Isidori_NCS}]\label{def:reldeg}
Function ${h \colon \R^n \to \R}$ has \textbf{relative degree} $r$ (where ${r \in \mathbb{Z}}$, ${r \geq 1}$) w.r.t.~\eqref{eq:affine_ODE} if it is $r$ times continuously differentiable and the following holds $\forall x \in \R^n$:
\begin{align}
\begin{split}
    L_g L_f^{r-1} h(x) & \neq 0, \\
    L_g L_f^k h(x) & = 0, \quad {\rm for} \; r \geq 2, \; k \in \{0, \ldots, r-2 \}, \\
\end{split}
\end{align}
where $L_g L_f^0 h(x) = L_g h(x)$ and the second condition only applies for $r \geq 2$.
\end{definition}
\noindent For higher relative degrees, there exist systematic methods to construct CBFs from $h$ and guarantee safety; see~\cite{Nguyen2016, Xiao2019, sarkar2020highrelative, Wang2020LearningCB} for details.
We will show that the relative degree may be significantly affected when time delay is included in the system \cite{germani2001asymptotic}.


\section{Safety of time delay systems}\label{sec:time_delay_sys}
In what follows, we extend the above concepts to certify safety and provide safety-critical controllers for time delay systems.

We denote the delay as $\tau > 0$, and we will rely on the following function spaces~\cite{Kim1999}.
\begin{itemize}
\item $\mathcal{C}$: the space of continuous functions mapping from $[-\tau,0]$ to $\R^n$ with norm $\| \phi \| = \max_{\vartheta \in [-\tau,0]} \| \phi(\vartheta) \|_2$ for $\phi \in \mathcal{C}$.
\item $\Q$: the space of functions mapping from $[-\tau,0]$ to $\R^n$ that are continuous almost everywhere on $[-\tau,0]$ except possibly for a finite set of points with discontinuity of the first kind.
The corresponding norm is $\| \phi \| = \sup_{\vartheta \in [-\tau,0]} \| \phi(\vartheta) \|_2$ for $\phi \in \mathcal{Q}$. \item $\B$: the space of continuous functions with almost everywhere continuous derivatives, i.e., ${\B = \{ \phi \in \mathcal{C}: \dot{\phi} \in \Q \}}$, equipped with the norm $\| \phi \| = \max_{\vartheta \in [-\tau,0]} \| \phi(\vartheta) \|_2$ for $\phi \in \B$.
Throughout the paper we use the convention that the derivative $\dot{\phi}$ indicates right-hand derivative at the discontinuity points.
\end{itemize}

\subsection{Retarded Dynamical Systems}\label{sec:RFDE}

First, we consider dynamical systems with time delays where the rate of change of the state depends on the past values of the state.
This class of systems is referred as retarded functional differential equation (RFDE) and is given in the form:
\begin{equation}\label{eq:Aut_FDE}
    \dot{x}(t) = \F(x_t),
\end{equation}
where $x \in \R^n$ is the state variable and $x_t: [-\tau,0] \to \R^n$ represents the history of the state over $[t-\tau,t]$ with delay $\tau > 0$. According to the standard Hale-Krasovsky notation \cite{krasovskii1963stability, hale1977theory}, it is defined by the shift:
\begin{equation}\label{eq:state_function}
x_t(\vartheta) =  x(t+\vartheta), \; \vartheta \in [-\tau,0],
\end{equation}
that is an element of the Banach space $\B$ defined above.
Functional $\F:\B \to \R^n$ is assumed to be locally Lipschitz continuous, hence~\eqref{eq:Aut_FDE} has a unique solution over a time interval ${t \in I(x_0)}$ for any initial state history $x_0 \in \B$ \cite{HaleLunel1993}.
Again, for simplicity of exposition, we assume ${I(x_{0}) = \R_{\geq 0}}$.
We define the derivative of the solution by $\dot{x}_t \in \Q$:
\begin{equation}\label{eq:state_derivative}
\dot{x}_t(\vartheta)=
\begin{cases}
\F(x_t) & {\rm if} \; \vartheta = 0,\\
\dot{x}(t+\vartheta) & {\rm if} \; \vartheta \in[-\tau, 0).
\end{cases}
\end{equation}

Since the state space of~\eqref{eq:Aut_FDE} is $\B$, which is infinite dimensional, one needs the state $x_t$ to evolve within an infinite dimensional {\em safe set} $\mathcal{S} \subset \B$ to certify safety as given by the following definition.

\begin{definition}[\textbf{Safety and Forward Invariance of RFDE}]\label{def:safety}
System \eqref{eq:Aut_FDE} is safe w.r.t. set ${\mathcal{S} \subset \B}$, if $\mathcal{S}$ is forward invariant w.r.t. \eqref{eq:Aut_FDE} such that ${x_0 \in \mathcal{S} \implies x_t \in \mathcal{S}}$, ${\forall t \geq 0}$ for the solution of \eqref{eq:Aut_FDE}.
\end{definition}

Assume that the set $\mathcal{S}$ can be constructed as the 0-superlevel set of a continuously Fr\'{e}chet differentiable functional ${\mathcal{H} \colon \B \to \R}$ (rather than a function):
\begin{equation}\label{eq:safesetcond_delay}
    \mathcal{S} = \{x_t \in \B : \mathcal{H} (x_t) \geq 0 \}.
\end{equation}
This makes it evident that safety functions defined over the finite dimensional space $\R^n$ are not adequate to establish safety.
Instead, one may construct so-called {\em safety functionals} defined over the infinite dimensional state space $\B$. 
\begin{definition}[\textbf{Safety Functional~\cite{OroAme2019}}]\label{def:SF_DDE}
A continuously Fr\'{e}chet differentiable functional ${\mathcal{H} \colon \B \to \R}$ is a \textbf{safety functional} for \eqref{eq:Aut_FDE} on $\mathcal{S}$ defined by~\eqref{eq:safesetcond_delay} if there exists ${\alpha \in \Keinf}$ such that ${\forall x_t \in \B}$:
\begin{equation}\label{eq:safecondDDE}
\dot{\mathcal{H}}(x_t,\dot{x}_t) \geq -\alpha \big(\mathcal{H}(x_t)\big),
\end{equation}
where ${\dot{\mathcal{H}} := \mathcal{L}_\F \mathcal{H} \colon \B \times \Q \to \R}$ is the derivative of $\mathcal{H}$ along \eqref{eq:Aut_FDE} with $\dot{x}_t$ given in~\eqref{eq:state_derivative}.
\end{definition}

Note that the derivative of $\mathcal{H}$ depends both on $x_t$ and $\dot{x}_t$ in~\eqref{eq:state_derivative}, and its calculation is detailed below.
The following theorem from~\cite{OroAme2019} certifies safety, i.e., the forward invariance of $\mathcal{S}$ via safety functionals.

\begin{theorem}\cite{OroAme2019}\label{thm:safe_RFDE}
\textit{
Set $\mathcal{S}$ in \eqref{eq:safesetcond_delay} is forward invariant w.r.t. \eqref{eq:Aut_FDE} if $\mathcal{H}$ is a safety functional for \eqref{eq:Aut_FDE} on $\mathcal{S}$, i.e., \eqref{eq:safecondDDE} is satisfied.
}
\end{theorem}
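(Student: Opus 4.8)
The plan is to mirror the proof of Theorem~\ref{thm:safeODE} as closely as possible, using the comparison lemma, but now applied to the \emph{scalar} trajectory $t \mapsto \mathcal{H}(x_t)$ induced by the infinite-dimensional flow. The key observation is that although $x_t$ lives in the Banach space $\B$, the quantity $\mathcal{H}(x_t)$ is a real-valued function of $t$, so once its time derivative is understood, the argument collapses to the one-dimensional comparison already carried out for ODEs.

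First I would define $\eta(t) := \mathcal{H}(x_t)$ for $t \geq 0$, where $x_t$ is the solution of \eqref{eq:Aut_FDE} with initial history $x_0 \in \mathcal{S}$. The crucial step is to show that $\eta$ is differentiable in $t$ and that its derivative is exactly the functional $\dot{\mathcal{H}}(x_t,\dot{x}_t)$ appearing in Definition~\ref{def:SF_DDE}. This is where the Fr\'echet differentiability of $\mathcal{H}$ and the regularity of the flow $t \mapsto x_t$ enter: one applies the chain rule for the composition $t \mapsto x_t \mapsto \mathcal{H}(x_t)$, so that
\begin{equation}
\dot{\eta}(t) = D\mathcal{H}(x_t)\,[\dot{x}_t] = \dot{\mathcal{H}}(x_t,\dot{x}_t),
\end{equation}
where $\dot{x}_t \in \Q$ is given by \eqref{eq:state_derivative}. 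Granting this identification, the hypothesis \eqref{eq:safecondDDE} reads precisely $\dot{\eta}(t) \geq -\alpha(\eta(t))$ for all $t \geq 0$.

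Next I would set up the comparison system exactly as in \eqref{eq:proof08}, namely $\dot{y}(t) = -\alpha(y(t))$ with $y(0) = \mathcal{H}(x_0) = \eta(0)$, which has a unique solution $y(t) = \beta(\eta(0),t)$ with $\beta \in \Keinf\mathcal{L}$ because $\alpha \in \Keinf$. Applying the comparison lemma \cite{Khalil2002} to the scalar differential inequality $\dot{\eta}(t) \geq -\alpha(\eta(t))$ against the equality $\dot{y}(t) = -\alpha(y(t))$ with matched initial data yields $\eta(t) \geq \beta(\eta(0),t)$ for all $t \geq 0$. Since $x_0 \in \mathcal{S}$ means $\eta(0) = \mathcal{H}(x_0) \geq 0$, and $\beta(r,t) \geq 0$ whenever $r \geq 0$ (as $\beta(\cdot,t) \in \Keinf$ preserves sign), we conclude $\mathcal{H}(x_t) = \eta(t) \geq 0$ for all $t \geq 0$, i.e.\ $x_t \in \mathcal{S}$, which is forward invariance of $\mathcal{S}$ by Definition~\ref{def:safety}.

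The main obstacle is the differentiability claim for $\eta$ and the justification of the chain rule in the infinite-dimensional setting. The subtlety is that $\dot{x}_t$ is only an element of $\Q$ (continuous almost everywhere, with possible jumps at $\vartheta=0$ coming from \eqref{eq:state_derivative}), so one must verify that $\mathcal{H}$, though Fr\'echet differentiable on $\B$, admits a derivative that acts sensibly on the merely piecewise-continuous $\dot{x}_t$; this is exactly why the paper carries the pair $(x_t,\dot{x}_t)$ as the argument of $\dot{\mathcal{H}}$ rather than differentiating naively. I would also need the comparison lemma to apply to a differential \emph{inequality} whose right-hand side satisfies the standard uniqueness/continuity conditions, which holds because $\alpha$ is continuous and the associated scalar ODE has a unique solution. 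Once the time derivative of $\mathcal{H}(x_t)$ is rigorously identified, the remainder is a verbatim repetition of the delay-free comparison argument.
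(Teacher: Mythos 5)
Your proposal is correct and follows essentially the same route as the paper: the comparison lemma applied to the scalar quantity $\mathcal{H}(x_t)$ against the auxiliary ODE $\dot{y}=-\alpha(y)$ with matched initial data, exactly as in the paper's proof (which mirrors its own proof of Theorem~\ref{thm:safeODE}). The extra care you devote to identifying $\frac{\rm d}{{\rm d}t}\mathcal{H}(x_t)$ with $\dot{\mathcal{H}}(x_t,\dot{x}_t)$ via the chain rule is handled in the paper separately by Theorem~\ref{thm:H_func_deriv} and Appendix~\ref{sec:appdx_Gateaux}, so it is a welcome but not divergent addition.
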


\begin{proof}[Proof\cite{OroAme2019}]
The proof follows from the comparison lemma.
We set up the initial value problem (with $y \in \R$):
\begin{equation}\label{eq:proof8}
\dot{y}(t) = -\alpha \big(y(t)\big), \qquad y(0) = \mathcal{H}(x_0),
\end{equation}
which has the solution (which is unique, as in the case of the proof of Theorem~\ref{thm:safety}, because $\alpha$ is an extended class-$\Kinf$ function):
\begin{equation}\label{eq:proof9}
y(t) = \beta \big(\mathcal{H}(x_0),t\big)
\end{equation}
for ${t \geq 0}$, where
${\beta \in \Keinf \mathcal{L}}$.
Then, applying the comparison lemma for \eqref{eq:safecondDDE} and \eqref{eq:proof8}, we obtain:
\begin{equation}\label{eq:proof10}
\mathcal{H}(x_t) \geq \beta \big(\mathcal{H}(x_0),t\big), \quad \forall t \geq 0.
\end{equation}
This leads to ${\mathcal{H}(x_0) \geq 0 \implies \mathcal{H}(x_t) \geq 0}$, ${\forall t \geq 0}$, that is, $\mathcal{S}$ is forward invariant w.r.t.~\eqref{eq:Aut_FDE}.
This completes the proof. 
\end{proof}

\subsection{Time Derivative of Safety Functional} 

The left-hand side of~\eqref{eq:safecondDDE} is the time derivative of $ \mathcal{H}$
along the solution of~\eqref{eq:Aut_FDE}.
While for finite dimensional delay-free systems the derivative $\dot{h}$ is given by the directional derivative or Lie derivative~\cite{krasovskii1963stability,oguchi2002input},
the derivative $\dot{\mathcal{H}}$ in the presence of time delay and infinite dimensional dynamics has an intricate representation which we break down below.

Recall that in the delay-free case, the derivative $\dot{h}$ of the safety function is calculated by the chain rule as ${\dot{h}(x) = \nabla h(x)  \dot{x}}$ (where $\dot{x} = f(x)$).
That is, $\dot{h}$ is given by a linear function of $\dot{x}$.
Similarly, for time delay systems the derivative $\dot{\mathcal{H}}$ of the safety functional can be given by a linear functional of the state derivative $\dot{x}_t$ in~\eqref{eq:state_derivative}.
This is stated by the theorem below.
\begin{theorem}\label{thm:H_func_deriv}
\textit{
Consider system~\eqref{eq:Aut_FDE} and let $\mathcal{H}\colon\B\to\R$ be a continuously Fr\'{e}chet differentiable functional.
Then there exists a unique ${\eta \colon \B \times [-\tau,0] \to \R^{1 \times n}}$ that is of bounded variation~\cite{HaleLunel1993, diekmann2012delay} in its second argument such that the time derivative of $\mathcal{H}$ along~\eqref{eq:Aut_FDE} can be expressed as:
\begin{equation}\label{eq:thm_bounded_var}
\dot{\mathcal{H}}(x_t,\dot{x}_t) = \int_{-\tau}^{0} {\rm d}_\vartheta \eta(x_t,\vartheta) \dot{x}_t(\vartheta),
\end{equation}
where the integral is a Stieltjes type.
}
\end{theorem}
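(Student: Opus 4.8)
The plan is to read~\eqref{eq:thm_bounded_var} as a Riesz-type representation of the Fr\'echet derivative of $\mathcal{H}$, paralleling the finite-dimensional fact that $\dot h = \nabla h\,\dot x$ is a linear function of $\dot x$ with the gradient replaced by a bounded-variation kernel. Fix $x_t \in \B$. Since $\mathcal{H}$ is continuously Fr\'echet differentiable, its derivative $D\mathcal{H}(x_t)\colon \B \to \R$ is a bounded linear functional with respect to the max norm carried by $\B$. Because $\B$ is dense in the space $\mathcal{C}$ of continuous functions $[-\tau,0]\to\R^n$ under the same max norm, $D\mathcal{H}(x_t)$ extends uniquely to a bounded linear functional on $\mathcal{C}$.

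First I would invoke the Riesz representation theorem for the dual of $\mathcal{C}=C([-\tau,0],\R^n)$: every bounded linear functional on $\mathcal{C}$ is given by a Stieltjes integral against a matrix-valued function of bounded variation. This produces a kernel $\eta(x_t,\cdot)\colon[-\tau,0]\to\R^{1\times n}$ of bounded variation such that
\begin{equation*}
D\mathcal{H}(x_t)[\phi] = \int_{-\tau}^{0} {\rm d}_\vartheta \eta(x_t,\vartheta)\,\phi(\vartheta), \qquad \forall \phi \in \mathcal{C}.
\end{equation*}
Normalizing $\eta(x_t,\cdot)$ in the usual way (right-continuous on the interior with $\eta(x_t,0)=0$) makes the representing kernel unique, which settles the uniqueness claim. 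Then I would identify $\dot{\mathcal{H}}$ with $D\mathcal{H}(x_t)$ evaluated at the velocity: by the Hale--Krasovskii shift $x_t(\vartheta)=x(t+\vartheta)$, the curve $t\mapsto x_t$ in $\B$ has time derivative $\dot{x}_t$ as defined in~\eqref{eq:state_derivative}, so formally $\tfrac{{\rm d}}{{\rm d}t}\mathcal{H}(x_t)=D\mathcal{H}(x_t)[\dot{x}_t]$; substituting the representation above yields exactly~\eqref{eq:thm_bounded_var}, now with the (possibly discontinuous) integrand $\dot{x}_t\in\Q$.

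The hard part will be making this last identification rigorous, because $\dot{x}_t$ lies in $\Q$ rather than $\B$: the difference quotient $s^{-1}(x_{t+s}-x_t)$ converges to $\dot{x}_t$ pointwise and boundedly but not uniformly wherever $\dot{x}$ jumps on $[t-\tau,t]$, so $t\mapsto x_t$ need not be Fr\'echet differentiable into $\B$ in the max norm and the naive chain rule is not directly available. I would resolve this by first extending the Stieltjes functional of the previous step from $\mathcal{C}$ to $\Q$ --- the integral $\int_{-\tau}^0 {\rm d}_\vartheta\eta(x_t,\vartheta)\,g(\vartheta)$ remains well-defined for $g\in\Q$ once $\eta$ is normalized, using the right-hand-derivative convention at the discontinuity points --- and then passing to the limit in the difference quotient via dominated convergence for the finite measure ${\rm d}\eta(x_t,\cdot)$, since the quotients are uniformly bounded and converge to $\dot{x}_t$ off a finite set. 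This delivers $\dot{\mathcal{H}}(x_t,\dot{x}_t)=\int_{-\tau}^0 {\rm d}_\vartheta\eta(x_t,\vartheta)\,\dot{x}_t(\vartheta)$ and completes the proof. The subtle point requiring care is that ${\rm d}\eta$ place no mass exactly where $\dot{x}_t$ jumps, which is precisely what the bounded-variation structure of $\eta$ together with the stated derivative convention are there to guarantee.
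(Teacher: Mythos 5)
Your proposal follows essentially the same route as the paper: represent the (continuous linear) Fr\'echet derivative $D_{\rm F}\mathcal{H}(x_t)$ by a Stieltjes integral against a bounded-variation kernel via the Riesz representation theorem, then evaluate it along $\dot{x}_t$ to obtain \eqref{eq:thm_bounded_var}. You are in fact somewhat more careful than the paper at the one delicate point --- the paper identifies $\lim_{\Delta t\to 0}\Delta t^{-1}(\mathcal{H}(x_{t+\Delta t})-\mathcal{H}(x_t))$ with the G\^{a}teaux derivative along $\dot{x}_t$ even though $\Delta t^{-1}(x_{t+\Delta t}-x_t)$ need not converge to $\dot{x}_t$ in the sup norm when $\dot{x}$ has jumps, whereas your dominated-convergence argument for the finite measure ${\rm d}\eta(x_t,\cdot)$ closes exactly that gap (and, with the right-hand-derivative convention and one-sided quotients, it works even where ${\rm d}\eta$ has atoms, so your closing caveat about ${\rm d}\eta$ placing no mass at jump points is not actually needed).
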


\begin{proof}
Here we provide the main steps of the proof and the remaining details (definitions and a lemma) are in Appendix~\ref{sec:appdx_Gateaux}.
The derivative of $\mathcal{H}$ along~\eqref{eq:Aut_FDE} can be expressed by the G\^{a}teaux derivative (see definition at~\eqref{eq:gateaux_definition}) along $\dot{x}_t$ as:
\begin{equation}\label{eq:proof_1}
  \dot{\mathcal{H}}(x_t,\dot{x}_t) = 
  \lim_{\Delta t \to 0} \frac{\mathcal{H}(x_{t+\Delta t}) - \mathcal{H}(x_t) }{\Delta t} 
=  \lim_{\Delta t \to 0} \frac{\mathcal{H}(x_t + \Delta t \dot{x}_t)-\mathcal{H}(x_t) }{\Delta t} 
= D_{\rm G}\mathcal{H}(x_t)(\dot{x}_t),
\end{equation}
where
$D_{\rm G}\mathcal{H}(x_t) \colon \Q \to \R$ denotes the G\^{a}teaux derivative at $x_t$.
Since $\mathcal{H}$ is continuously Fr\'{e}chet differentiable, the G\^{a}teaux derivative exists and can be given by the Fr\'{e}chet derivative (see definition at~\eqref{eq:frechet_definition}) based on Lemma~\ref{lem:gateaux_is_frechet}:
\begin{equation}
D_{\rm G}\mathcal{H}(x_t)(\dot{x}_t)=D_{\rm F}\mathcal{H}(x_t)\dot{x}_t.
\end{equation}
Here $D_{\rm F}\mathcal{H}(x_t) \colon \Q \to \R$ is the Fr\'{e}chet derivative at $x_t$ that is a continuous linear functional (applied on $\dot{x}_t$).
Then, the Riesz representation theorem (see Theorem~\ref{thm:riesz} in Appendix~\ref{sec:appdx_Gateaux}) implies that there exists a unique ${\eta \colon \B \times [-\tau,0] \to \R^{1 \times n}}$ that is of bounded variation~\cite{HaleLunel1993, diekmann2012delay} in its second argument such that this linear functional can be expressed as the Stieltjes integral:
\begin{equation}
D_{\rm F}\mathcal{H}(x_t)\dot{x}_t=
\int_{-\tau}^{0} {\rm d}_\vartheta \eta(x_t,\vartheta) \dot{x}_t(\vartheta),
\end{equation}
which leads to~\eqref{eq:thm_bounded_var} and proves the statement in Theorem~\ref{thm:H_func_deriv}.
\end{proof}

To summarize, in the delay-free case function $h$ is continuously differentiable, the gradient $\nabla h(x)$ exists, and it allows the calculation of the directional derivatives of $h$ in any direction, including $\dot{x}$.
Similarly, with delay we assume that
the functional $\mathcal{H}$ is continuously Fr\'{e}chet differentiable, and hence the G\^{a}teaux derivative can be expressed along any direction, including $\dot{x}_t$.
In this sense, the integral with $\eta$ is the infinite dimensional counterpart of the scalar product with the gradient $\nabla h(x)$.

The expression of $\eta$ depends on the specific form of $\mathcal{H}$ (just as the expression of $\nabla h(x)$ depends on $h$).
For example, when $\mathcal{H}(x_t)$ involves both delay-free states, multiple discrete (point) delays ${\tau_j \in [-\tau,0]}$, ${j \in \{1, \ldots, l\}}$, and a continuous (distributed) delay over $[-\sigma_1,-\sigma_2] \subseteq [-\tau,0]$,
the bounded variation.
It has the form:
\begin{equation}\label{eq:bounded_variation}
\eta(x_t,\vartheta) = w_0(x_t) \theta(\vartheta) + \sum_{j=1}^l w_j(x_t)\hat{\theta}(\vartheta+\tau_j) + 
\eta_{\rm d}(x_t,\vartheta)
\end{equation}
with 
\begin{equation}
\eta_{\rm d}(x_t,\vartheta) = 
\begin{cases}
0  & {\rm if} \quad \vartheta < -\sigma_1,\\
\int_{-\sigma_1}^{\vartheta} w_{\rm d}(x_t,s) \,{\rm d}s & {\rm if} \quad -\sigma_1 \leq \vartheta \leq -\sigma_2,\\
\int_{-\sigma_1}^{-\sigma_2} w_{\rm d}(x_t,s) \,{\rm d}s & {\rm if} \quad -\sigma_2 < \vartheta ,\\
\end{cases} 
\end{equation}
see Fig.~\ref{fig:NBV} for illustration.
Here the weights ${w_0,w_j \colon \mathcal{B} \to \mathbb{R}^{1 \times n}}$ and ${w_{\rm d} \colon \mathcal{B} \times \mathbb{R} \to \mathbb{R}^{1 \times n}}$ are (potentially complicated nonlinear) functionals that depend on the form of $\mathcal{H}$ (an example is given below).
Furthermore, $\theta$ and $\hat{\theta}$ denote the right and left continuous Heaviside step functions:
\begin{equation}
\theta(s) = 
\begin{cases}
1  & {\rm if} \quad s \geq 0,\\
0  & {\rm if} \quad s < 0,\\
\end{cases}, \qquad
\hat{\theta}(s)=1-\theta(-s) = 
\begin{cases}
1  & {\rm if} \quad s > 0,\\
0  & {\rm if} \quad s \leq 0.\\
\end{cases}
\end{equation}

The integral in~\eqref{eq:thm_bounded_var} can also be expressed as a distribution that comprises of finitely many shifted Dirac delta distributions (corresponding to the point delays) and a bounded kernel (corresponding to the remaining distributed delay) in the form:
\begin{equation}\label{eq:wight_function}
\dot{\mathcal{H}}(x_t,\dot{x}_t) = 
\int_{-\infty}^{\infty}  w(x_t,\vartheta) \dot{x}_t(\vartheta) \,{\rm d}\vartheta,
\end{equation}
where the kernel $w\colon\B \times \R \to \R^{1 \times n}$
reads:
\begin{equation}\label{eq:Riesz_2}
w(x_t,\vartheta) = 
w_0(x_t) \delta(\vartheta) + \sum_{j=1}^l w_j(x_t)\delta(\vartheta+\tau_j) + 
\tilde{w}_{\rm d}(x_t,\vartheta)
\end{equation}
with $\delta$ being the Dirac delta and
\begin{equation}
\tilde{w}_{\rm d}(x_t,\vartheta) = 
\begin{cases}
0  & {\rm if} \quad \vartheta < -\sigma_1,\\
\int_{-\sigma_1}^{\vartheta} w_{\rm d}(x_t,s) \,{\rm d}s  & {\rm if} \quad -\sigma_1 \leq \vartheta \leq -\sigma_2,\\
0  & {\rm if} \quad -\sigma_2 < \vartheta.\\
\end{cases} 
\end{equation}
Substitution of~\eqref{eq:Riesz_2} and~\eqref{eq:state_derivative} into~\eqref{eq:wight_function} leads to the following conclusion for the derivative of $\mathcal{H}$ along \eqref{eq:Aut_FDE}.

\begin{corollary}
\textit{
When $\mathcal{H}(x_t)$ involves both delay-free states, point delays ${\tau_j \in [-\tau,0]}$, ${j \in \{1, \ldots, l\}}$, and a distributed delay over $[-\sigma_1,-\sigma_2] \subseteq [-\tau,0]$,
the time derivative of $\mathcal{H}$ in~\eqref{eq:thm_bounded_var} along \eqref{eq:state_derivative} is of the form:
\begin{equation}\label{eq:functional_H_dot}
\begin{split}
\dot{\mathcal{H}}(x_t,\dot{x}_t)
= \mathcal{L}_\F \mathcal{H}(x_t,\dot{x}_t)
:=
w_0(x_t) 
\F(x_t) + 
\sum_{j=1}^l w_j(x_t)\dot{x}_t(-\tau_j) 
+ \int_{-\sigma_1}^{-\sigma_2} w_{\rm d}(x_t,\vartheta) \dot{x}_t(\vartheta) \, {\rm d}\vartheta,
\end{split}
\end{equation}
with weights ${w_0, w_j \colon \mathcal{B} \to \mathbb{R}^{1 \times n}}$ and ${w_{\rm d} \colon \mathcal{B} \times \mathbb{R} \to \mathbb{R}^{1 \times n}}$.
}
\end{corollary}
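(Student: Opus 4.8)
The statement is a direct specialization of Theorem~\ref{thm:H_func_deriv} to the concrete bounded-variation kernel~\eqref{eq:bounded_variation}, so my plan is purely computational: substitute the explicit $\eta$ into the abstract Stieltjes representation and evaluate the resulting integral piece by piece. I would start from~\eqref{eq:thm_bounded_var}, namely $\dot{\mathcal{H}}(x_t,\dot{x}_t)=\int_{-\tau}^{0} {\rm d}_\vartheta \eta(x_t,\vartheta)\,\dot{x}_t(\vartheta)$, and insert the three-term decomposition of $\eta$ from~\eqref{eq:bounded_variation}. By linearity of the Stieltjes integral in its integrator, $\dot{\mathcal{H}}$ splits into three contributions, one for each summand of $\eta$, which I would handle separately.

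For the point-mass terms I would compute the Stieltjes differentials of the Heaviside pieces. The differential of $w_0(x_t)\theta(\vartheta)$ is a unit atom at $\vartheta=0$, so it contributes $w_0(x_t)\dot{x}_t(0)$; invoking the definition of $\dot{x}_t$ in~\eqref{eq:state_derivative}, which sets $\dot{x}_t(0)=\F(x_t)$, this becomes the term $w_0(x_t)\F(x_t)$. Likewise, each $w_j(x_t)\hat{\theta}(\vartheta+\tau_j)$ carries a unit atom at $\vartheta=-\tau_j$, contributing $\sum_{j=1}^l w_j(x_t)\dot{x}_t(-\tau_j)$. The remaining term $\eta_{\rm d}(x_t,\vartheta)$ is absolutely continuous in $\vartheta$ with Lebesgue density $w_{\rm d}(x_t,\vartheta)$ on $[-\sigma_1,-\sigma_2]$ and constant outside that interval, so its Stieltjes differential reduces to $w_{\rm d}(x_t,\vartheta)\,{\rm d}\vartheta$ supported on $[-\sigma_1,-\sigma_2]$, yielding the distributed term $\int_{-\sigma_1}^{-\sigma_2} w_{\rm d}(x_t,\vartheta)\dot{x}_t(\vartheta)\,{\rm d}\vartheta$. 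Summing the three contributions gives exactly~\eqref{eq:functional_H_dot}. An equivalent and perhaps cleaner route would use the distributional form~\eqref{eq:wight_function}--\eqref{eq:Riesz_2} and apply the sifting property of the Dirac delta directly: $\delta(\vartheta)$ extracts $\dot{x}_t(0)=\F(x_t)$, each $\delta(\vartheta+\tau_j)$ extracts $\dot{x}_t(-\tau_j)$, and the bounded kernel $\tilde{w}_{\rm d}$ reproduces the Lebesgue integral.

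The only genuine subtlety -- and the step I would write out most carefully -- is the evaluation of the atomic parts, because $\dot{x}_t\in\Q$ is only piecewise continuous and may jump precisely at the delay nodes $\vartheta=0$ and $\vartheta=-\tau_j$. There the value picked out by a Stieltjes atom depends on the one-sided continuity of the integrator. This is exactly why~\eqref{eq:bounded_variation} employs the right-continuous $\theta$ at the endpoint $\vartheta=0$ and the left-continuous $\hat{\theta}$ at the interior nodes: these choices select the right-hand derivative values, consistent with the paper's convention that $\dot{\phi}$ denotes the right-hand derivative at discontinuities. I would verify that under these conventions the atoms are unambiguously evaluated, and in particular that the atom at $\vartheta=0$ returns $\F(x_t)$ rather than the left limit of $\dot{x}(t+\vartheta)$. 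Once this bookkeeping is settled, no further analysis is needed; the remaining manipulations are routine applications of linearity and of the underlying definitions, so I do not anticipate any hard estimates or existence arguments beyond those already furnished by Theorem~\ref{thm:H_func_deriv}.
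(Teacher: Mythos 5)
Your proposal is correct and follows essentially the same route as the paper, which simply substitutes the kernel \eqref{eq:Riesz_2} and the state derivative \eqref{eq:state_derivative} into the integral representation \eqref{eq:wight_function} and reads off the three contributions. Your additional care about evaluating the Stieltjes atoms at the discontinuity points of $\dot{x}_t$ (and the role of the right- versus left-continuous Heaviside functions) is a worthwhile refinement of a step the paper leaves implicit, but it does not constitute a different argument.
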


\begin{figure}
\centerline{\includegraphics[width=.9\columnwidth]{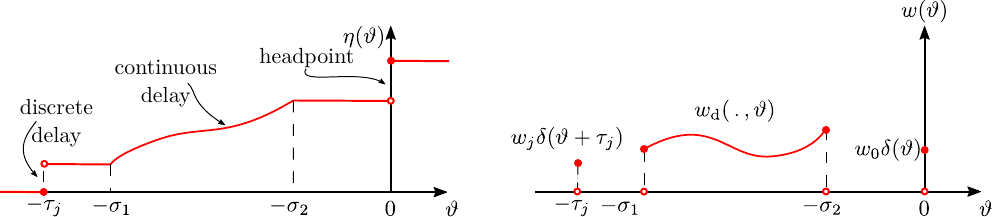}}
\caption{Illustration of the bounded variation function $\eta$ in~\eqref{eq:bounded_variation} and the distribution kernel $w$ in~\eqref{eq:Riesz_2} for a discrete delay $\tau_1$ and a continuous delay over $[-\sigma_1,-\sigma_2]$.
}
\label{fig:NBV}
\end{figure}

This leads to two important properties about the derivative of $\mathcal{H}$, given by the following remarks.
We will rely on these properties when discussing the relative degree of control barrier functionals in Section~\ref{sec:rel_deg}.

\begin{remark}[($\mathcal{H}(x_t)$ contains $x(t)$)]\label{rem:atomic}
When $\mathcal{H}(x_t)$ contains the present state $x(t)=x_t(0)$, it is indicated by $w_0(x_t) \neq 0$, $\forall x_t \in \B$.
Then, the time derivative $\dot{\mathcal{H}}(x_t,\dot{x}_t)$ is directly affected by the right-hand side $\F(x_t)$.
This will be a necessary requirement for enforcing safety via control (i.e., when we maintain safety by designing $\F$ for a closed control loop).
\end{remark}

\begin{remark}[($\mathcal{L}_\F \mathcal{H}$ independent of $\dot{x}_t$)]\label{rem:simplification}
If $w_{\rm d}$ is continuously differentiable in $\vartheta$, with derivative denoted by $w_{\rm d}'$, then the integral $\int_{-\sigma_1}^{-\sigma_2} w_{\rm d}(x_t,\vartheta) \dot{x}_t(\vartheta)  {\rm d}\vartheta$
can be simplified.
Since $\frac{\partial}{\partial t} x_t(\vartheta)=\frac{\partial}{\partial \vartheta} x_t(\vartheta)$, integration by parts eliminates $\dot{x}_t$ and leads to an expression that depends on $x_t$ only:
\begin{equation}
\Lambda(x_t) := \int_{-\sigma_1}^{-\sigma_2} w_{\rm d}(x_t,\vartheta) \dot{x}_t(\vartheta) \, {\rm d}\vartheta = w_{\rm d}(x_t,-\sigma_2) x_t(-\sigma_2) - w_{\rm d}(x_t,-\sigma_1) x_t(-\sigma_1) - 
\int_{-\sigma_1}^{-\sigma_2} w_{\rm d}'(x_t,\vartheta) x_t(\vartheta) \, {\rm d}\vartheta.
\end{equation}
Additionally, if $w_j(x_t)=0$, $\forall x_t \in \B$, $\forall j \geq 1$ also holds, then $\mathcal{L}_\F \mathcal{H}$ can be expressed as a functional of $x_t$ only, independent of $\dot{x}_t$, which we shorty denote as $\mathcal{L}_\F \mathcal{H}(x_t)$.
Then, the time derivative of $\mathcal{L}_\F \mathcal{H}(x_t)$ can be calculated by the same method as that of $\mathcal{H}(x_t)$.
We will exploit this in higher relative degree scenarios in Section~\ref{sec:rel_deg}.
\end{remark}

We demonstrate the above calculation of the time derivative of $\mathcal{H}$ by an example below.
\begin{eexample} \label{example:functional}
Consider the system \eqref{eq:Aut_FDE} with functional $\mathcal{H}$ that contains point delays $\tau_1, \ldots, \tau_l$ and a distributed delay over $[-\sigma_1,-\sigma_2]$, defined as:
\begin{equation}\label{eq:safe_funnal_gen_exa}
\begin{split}
    \mathcal{H}(x_t) = h\bigg( x_t(0), x_t(-\tau_1),\dots, x_t(-\tau_l), \int_{-\sigma_1}^{-\sigma_2} \rho(\vartheta) \, \kappa \big(x_t(\vartheta)\big)\, {\rm d} \vartheta
    \bigg),
    \end{split}
\end{equation}
where ${h \colon \R^n \times \ldots \times \R^n \to \R}$, ${\kappa \colon \R^n \to \R^n}$ and $\rho \colon [-\sigma_1,-\sigma_2] \to \R^{n\times n}$ are continuously differentiable.
One may directly take the time derivative of \eqref{eq:safe_funnal_gen_exa} and substitute~\eqref{eq:state_derivative}, which leads to the following derivative required for certifying safety:
\begin{equation}\label{eq:safe_funnal_der_exa}
\begin{split}
\dot{\mathcal{H}}(x_t,\dot{x}_t)
& =\underbrace{\nabla_0 h(\dots)}_{w_0(x_t)}  \mathcal{F}(x_t)
+ \sum_{j=1}^l \underbrace{\nabla_j h(\dots)}_{w_j(x_t)}  \dot{x}_t(-\tau_j) 
+\int_{-\sigma_1}^{-\sigma_2}
\underbrace{\nabla_{l+1} h(\dots) \rho(\vartheta) \nabla \kappa \big(x_t(\vartheta)\big)}_{w_{\rm d}(x_t,\vartheta)} \dot{x}_t(\vartheta)\, {\rm d} \vartheta,
\end{split}
\end{equation}
where $\nabla_j$ is the gradient with respect to the $j$-th argument of a function, $\nabla \kappa$ is the Jacobian of $\kappa$, and $(\dots)$ is a shorthand notation for evaluation at the argument of $h$ as in \eqref{eq:safe_funnal_gen_exa}.
\end{eexample}

While the example~\eqref{eq:safe_funnal_gen_exa} covers many practical choices of safety functionals, it does not include all possibilities for $\mathcal{H}$.
A more general example with a double integral can be found in Appendix~\ref{sec:appdx_double_int}, and one could also include triple, quadruple, etc. integrals.

\subsection{Neutral Dynamical Systems}\label{sec:NFDE}

Now, we address time delay systems where the rate of change of the state depends on the past values of the state as well as past state derivatives.
This class of systems is referred as neutral functional differential equation (NFDE) and is given in the form:
\begin{equation}\label{eq:NFDE}
\dot{x}(t) = \F(x_t,\dot{x}_t),
\end{equation}
where ${x \in \R^n}$ is the state variable, ${x_t \in \B}$ is the state history defined in \eqref{eq:state_function} and ${\dot{x}_t \in \Q}$ is its derivative. The functional  
${\F \colon \B \times \Q \to \R^n}$ 
is assumed to be locally Lipschitz continuous in its first argument ($x_t$), continuous in its second argument ($\dot{x}_t$), and assumed to satisfy a modified Lipschitz condition that takes the form:
\begin{equation}\label{eq:neutral_Lips}
\| \F\big(x_t,\phi_1)-\F\big(x_t,\phi_2) \|_2 \leq  L \|\phi_1-\phi_2\|, \quad L<1,
\end{equation}
for any functions ${\phi_1,\phi_2 \in \Q}$ such that ${\phi_1(\vartheta)=\phi_2(\vartheta)}$, ${\forall \vartheta \in[-\tau, -\delta]}$ for some ${\delta>0}$.

According to Theorem 3.2 in \cite{kolmanovskii1986stability}, if the condition~\eqref{eq:neutral_Lips} holds, system~\eqref{eq:NFDE} has a unique (continuous but potentially nonsmooth) solution over a time interval ${t \in I(x_0,\dot{x}_0)}$ for initial history $x_0 \in \B$ and $\dot{x}_0 \in \Q$.
This existence and uniqueness result was first proved by \cite{zverkin1962existence} and later also summarized in \cite{akhmerov1984theory,angelov1980existence}.
The derivative of the solution then becomes:
\begin{equation}\label{eq:state_derivative_neutral}
\dot{x}_t(\vartheta)=
\begin{cases}
\F(x_t,\dot{x}_t) & {\rm if} \; \vartheta = 0,\\
\dot{x}(t+\vartheta) & {\rm if} \; \vartheta \in[-\tau, 0).
\end{cases}
\end{equation}
For comprehensive details on the properties of NFDEs, please refer to~\cite{krasovskii1963stability,hale1977theory}.

The safety of~\eqref{eq:NFDE} can be formulated the same way as it was done for retarded systems in Section~\ref{sec:RFDE}.
Safety means that the state $x_t$ evolves within the safe set $\mathcal{S} \subset \B$, which is constructed by the safety functional $\mathcal{H}: \B \to \R$ (see Definitions~\ref{def:safety} and~\ref{def:SF_DDE}).
Then, the safety of the neutral system \eqref{eq:NFDE} is formally certified by the following corollary.
\begin{corollary}\label{cor:safe_NFDE}
\textit{
Set $\mathcal{S}$ in \eqref{eq:safesetcond_delay} is forward invariant w.r.t. \eqref{eq:NFDE} if $\mathcal{H}$ is a safety functional for \eqref{eq:NFDE} on $\mathcal{S}$, i.e.:
\begin{equation}\label{}
\dot{\mathcal{H}}(x_t,\dot{x}_t) \geq -\alpha\big(\mathcal{H}(x_t)\big),
\end{equation}
is satisfied, where ${\dot{\mathcal{H}} := \mathcal{L}_\F \mathcal{H} \colon \B \times \Q \to \R}$ is the derivative of $\mathcal{H}$ along \eqref{eq:NFDE} with $\dot{x}_t$ given in~\eqref{eq:state_derivative_neutral}.
}
\end{corollary}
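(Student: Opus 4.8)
The plan is to follow the same comparison-lemma argument used in the proof of Theorem~\ref{thm:safe_RFDE}, and to verify that it survives the passage from retarded to neutral dynamics. The only structural difference is that solutions of~\eqref{eq:NFDE} are merely continuous and possibly nonsmooth, so the first task is to confirm that the scalar signal $V(t) := \mathcal{H}(x_t)$ is still regular enough for the comparison lemma to apply, and that its almost-everywhere derivative coincides with $\dot{\mathcal{H}}(x_t,\dot{x}_t)$ built from the neutral state derivative~\eqref{eq:state_derivative_neutral}.

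First I would invoke the existence–uniqueness result for NFDEs: under the modified Lipschitz condition~\eqref{eq:neutral_Lips}, Theorem~3.2 in~\cite{kolmanovskii1986stability} guarantees a unique continuous solution whose derivative $\dot{x}_t$ lies in $\Q$. Consequently $t \mapsto x(t)$ is absolutely continuous, hence so is the curve $t \mapsto x_t$ in $\B$, with right-hand derivative $\dot{x}_t$ as in~\eqref{eq:state_derivative_neutral}. The crucial observation is that Theorem~\ref{thm:H_func_deriv} uses no property distinguishing retarded from neutral systems: its proof relies only on $\mathcal{H}$ being continuously Fr\'echet differentiable and produces the Stieltjes representation~\eqref{eq:thm_bounded_var} for the directional (G\^ateaux) derivative of $\mathcal{H}$ along any admissible direction. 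Feeding in the neutral state derivative~\eqref{eq:state_derivative_neutral} therefore yields
\[
\dot{V}(t) = \dot{\mathcal{H}}(x_t,\dot{x}_t) = \int_{-\tau}^{0} {\rm d}_\vartheta \eta(x_t,\vartheta)\,\dot{x}_t(\vartheta)
\]
for almost every $t \geq 0$, so $V$ is absolutely continuous: the discontinuities of $\dot{x}_t$ are of the first kind and occur at finitely many times, leaving $V$ continuous with an integrable derivative.

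With this in hand I would reproduce the comparison argument verbatim. Consider the scalar initial value problem $\dot{y}(t) = -\alpha(y(t))$, $y(0) = \mathcal{H}(x_0)$, which has the unique solution $y(t) = \beta(\mathcal{H}(x_0),t)$ with $\beta \in \Keinf\mathcal{L}$, since $\alpha \in \Keinf$. The safety-functional hypothesis gives $\dot{V}(t) \geq -\alpha(V(t))$ almost everywhere, so the comparison lemma~\cite{Khalil2002,Konda2021} yields $\mathcal{H}(x_t) = V(t) \geq \beta(\mathcal{H}(x_0),t)$ for all $t \geq 0$. Because $\beta(\cdot,t) \in \Keinf$ vanishes at zero and is monotonically increasing, $\mathcal{H}(x_0) \geq 0$ forces $\mathcal{H}(x_t) \geq 0$ for all $t \geq 0$, i.e., $\mathcal{S}$ in~\eqref{eq:safesetcond_delay} is forward invariant with respect to~\eqref{eq:NFDE}.

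I expect the main obstacle to be the regularity step rather than the comparison step: justifying that $V(t) = \mathcal{H}(x_t)$ is absolutely continuous and that the chain rule holds, so that $\dot{V} = \dot{\mathcal{H}}$ almost everywhere, even though the neutral solution has a genuinely discontinuous derivative. The resolution is that $\mathcal{H}$ is continuously Fr\'echet differentiable while $t \mapsto x_t$ is absolutely continuous into $\B$, so the composition is absolutely continuous and differentiable wherever $\dot{x}_t$ is continuous, which is all but finitely many $t$, and the comparison lemma requires the differential inequality only almost everywhere. Everything else is identical to the retarded case of Theorem~\ref{thm:safe_RFDE}.
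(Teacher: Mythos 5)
Your proposal is correct and follows the same route the paper intends for this corollary: the comparison-lemma argument of Theorem~\ref{thm:safe_RFDE} applied verbatim, with the only change being that $\dot{x}_t$ in the Stieltjes representation of $\dot{\mathcal{H}}$ is given by the neutral state derivative~\eqref{eq:state_derivative_neutral}. Your added care about the absolute continuity of $t \mapsto \mathcal{H}(x_t)$ and the almost-everywhere validity of the differential inequality is a reasonable refinement of a step the paper leaves implicit, but it does not change the argument.
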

The difference between retarded and neutral systems is that for the neutral case the derivative of $\mathcal{H}$ is provided by substituting~\eqref{eq:state_derivative_neutral} into~\eqref{eq:thm_bounded_var} (while we used \eqref{eq:state_derivative} for retarded systems).
Analogously to \eqref{eq:functional_H_dot}, we can write:
\begin{equation}\label{eq:functional_H_dot2}
\begin{split}
\dot{\mathcal{H}}(x_t,\dot{x}_t)
=
w_0(x_t) 
 \F(x_t,\dot{x}_t) + 
\sum_{j=1}^l w_j(x_t)\dot{x}_t(-\tau_j) 
+ \int_{-\sigma_1}^{-\sigma_2} w_{\rm d}(x_t,\vartheta) \dot{x}_t(\vartheta) \, {\rm d}\vartheta.
\end{split}
\end{equation}

\section{Safety-critical Control of time delay systems}\label{sec:safety_crit_cont}

Building upon the framework establishing safety for dynamical systems with time delays, we now extend this approach to control systems with time delays.
Similarly to how safety functions were extended to control barrier functions in Section~\ref{sec:delay_free}, in this section we extend safety functionals to control barrier functionals, and use them as tool for safety-critical controller synthesis.

Let us consider the following affine control system with state delay:
\begin{equation}\label{eq:affine_sys_delay}
\dot{x}(t)=\F(x_t) + \G(x_t)u(t),
\end{equation}
where $x \in \R^n$ is the state, $x_t \in \B$ is the state history defined in \eqref{eq:state_function}, $\dot{x}_t \in \Q$ is its derivative, $u \in \R^m$ is the input, while $\F:\B \to \R^n$ and $\G:\B \to \R^{n \times m}$  are locally Lipschitz continuous functionals 
in their first argument and continuous in their second.
The state derivative along this system can also be expressed as:
\begin{equation}\label{eq:state_derivative_control}
\dot{x}_t(\vartheta)=
\begin{cases}
\F(x_t) + \G(x_t)u(t), & {\rm if} \; \vartheta = 0,\\
\dot{x}(t+\vartheta) & {\rm if} \; \vartheta \in[-\tau, 0).
\end{cases}
\end{equation}
We seek to design a locally Lipschitz continuous controller ${\K \colon \B \times \Q \to \R^m}$, ${u=\K(x_t,\dot{x}_t)}$,
and enforce the closed-loop system:
\begin{equation}\label{eq:closedloop_delay}
\dot{x}(t) = \F(x_t) + \G(x_t) \K(x_t,\dot{x}_t)
\end{equation}
to be safe w.r.t. the set $\mathcal{S}$ in~\eqref{eq:safesetcond_delay}, that is, $\mathcal{S}$ is forward invariant w.r.t.~\eqref{eq:closedloop_delay}.
The overall right-hand side of the closed-loop system ~\eqref{eq:closedloop_delay} should satisfy the same restrictions as $\F$ in~\eqref{eq:neutral_Lips} to have existence and uniqueness guarantee.

To design a control input that guarantees the system to be safe motivates the introduction of {\em control barrier functionals}.
\begin{definition}[\textbf{Control Barrier Functional, CBFal}]\label{def:CBFal}
A continuously Fr\'{e}chet differentiable functional ${\mathcal{H} \colon \B \to \R}$ is a \textbf{control barrier functional} (CBFal) for \eqref{eq:affine_sys_delay} on $\mathcal{S}$ defined by~\eqref{eq:safesetcond_delay},
if there exists $\alpha \in \Keinf$ such that $\forall x_t \in \B$:
\begin{equation}\label{eq:CBFal_condition}
\sup_{u\in \R^m} \dot{\mathcal{H}}(x_t,\dot{x}_t,u) > - \alpha \big( \mathcal{H}(x_t) \big),
\end{equation}
where
\begin{equation}\label{eq:H_Lie_deriv}
 \dot{ \mathcal{H} }(x_t,\dot{x}_t,u) =
\mathcal{L}_\F \mathcal{H} (x_t,\dot{x}_t)
 + 
\mathcal{L}_\G \mathcal{H} (x_t)\,u
\end{equation}
is the derivative of $\mathcal{H}$ along \eqref{eq:affine_sys_delay}, given by $\dot{x}_t$ in~\eqref{eq:state_derivative_control} and the functionals 
${\mathcal{L}_\F \mathcal{H} \colon \B \to \R}$ 
and ${\mathcal{L}_\G \mathcal{H} \colon \B \to \R}$.
\end{definition}

Note that the derivative $\dot{\mathcal{H}}$ is still obtained by the linear functional in~\eqref{eq:thm_bounded_var}, however, $\dot{x}_t$ is now given by~\eqref{eq:state_derivative_control} and it involves $u(t)$.
Therefore, $\dot{\mathcal{H}}$ in~\eqref{eq:CBFal_condition} depends on $u$ as an affine function.
Analogously to~\eqref{eq:functional_H_dot}, the time derivative of $\mathcal{H}$ is expressed as follows for the case of multiple point delays and an additional distributed delay:
\begin{equation}\label{eq:Lie_derivatives}
\begin{split}
 \mathcal{L}_\F  \mathcal{H} (x_t,\dot{x}_t) = & w_0(x_t) \F(x_t) + 
\sum_{j=1}^l w_j(x_t)\dot{x}_t(-\tau_j) 
+ \int_{-\sigma_1}^{-\sigma_2} w_{\rm d}(x_t,\vartheta) \dot{x}_t(\vartheta) \, {\rm d}\vartheta,
\\
  \mathcal{L}_\G  \mathcal{H} (x_t)= & w_0(x_t)\G(x_t). 
\end{split}
\end{equation}

With the definition of CBFal we state our main result to ensure safety for systems with state delay by extending Theorem~\ref{thm:safe_RFDE}.
\begin{theorem}\label{thm:safety_delay}
\textit{
If $\mathcal{H}$ is a CBFal for \eqref{eq:affine_sys_delay} on $\mathcal{S}$ defined by~\eqref{eq:safesetcond_delay}, then any locally Lipschitz continuous controller $\K \colon \B \times \Q \to \R^m$, $u=\K(x_t,\dot{x}_t)$ satisfying:
\begin{equation} \label{eq:safety_delay}
\dot{\mathcal{H}}\big(x_t,\dot{x}_t,\K(x_t,\dot{x}_t)\big) \geq - \alpha \big( \mathcal{H}(x_t) \big).
\end{equation}
${\forall x_t \in \mathcal{S}}$ renders set $S$ forward invariant w.r.t.~\eqref{eq:closedloop_delay}.
}
\end{theorem}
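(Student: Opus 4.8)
The plan is to mirror the delay-free argument by which Theorem~\ref{thm:safety} was reduced to Theorem~\ref{thm:safeODE}: close the loop with the given controller and recognize that the resulting autonomous delay system admits $\mathcal{H}$ as a safety functional, so that the forward-invariance results of Section~\ref{sec:time_delay_sys} apply. First I would note that the CBFal condition~\eqref{eq:CBFal_condition}, together with the affine-in-$u$ structure of~\eqref{eq:H_Lie_deriv}, guarantees that at every state history $x_t \in \B$ there is an admissible input satisfying $\dot{\mathcal{H}}(x_t,\dot{x}_t,u) \geq -\alpha(\mathcal{H}(x_t))$; the hypothesis is precisely that a locally Lipschitz continuous selection $\K$ realizing~\eqref{eq:safety_delay} on $\mathcal{S}$ has been fixed. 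Substituting $u = \K(x_t,\dot{x}_t)$ into the state derivative~\eqref{eq:state_derivative_control} and into~\eqref{eq:H_Lie_deriv} collapses the control system onto the autonomous closed loop~\eqref{eq:closedloop_delay}, and~\eqref{eq:safety_delay} becomes exactly the safety-functional inequality $\dot{\mathcal{H}}(x_t,\dot{x}_t) \geq -\alpha(\mathcal{H}(x_t))$ of Definition~\ref{def:SF_DDE} evaluated along the closed-loop derivative.

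I expect the main obstacle to be that the closed loop is not retarded in general. Since $\K$ may depend on $\dot{x}_t$, the right-hand side $\F + \G\K$ of~\eqref{eq:closedloop_delay} contains past state derivatives, so the closed loop is a neutral FDE of the form~\eqref{eq:NFDE} rather than an RFDE. Two things must therefore be handled with care. First, well-posedness: I would require the composite right-hand side to satisfy the modified Lipschitz condition~\eqref{eq:neutral_Lips} with constant strictly below one, which secures a unique solution on $\R_{\geq 0}$ and hence a well-defined trajectory $x_t$ to which the invariance argument can be applied. Second, the invocation of the autonomous result: instead of Theorem~\ref{thm:safe_RFDE} I would apply its neutral counterpart, Corollary~\ref{cor:safe_NFDE}, with $\dot{x}_t$ taken from~\eqref{eq:state_derivative_neutral}. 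In the special case where $\K$ depends on $x_t$ only, the closed loop remains retarded and Theorem~\ref{thm:safe_RFDE} applies directly, recovering $\mathcal{H}(x_0)\geq 0 \implies \mathcal{H}(x_t)\geq 0$ as in~\eqref{eq:proof10}.

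A remaining technical subtlety is that~\eqref{eq:safety_delay} is assumed only for $x_t \in \mathcal{S}$, whereas Definition~\ref{def:SF_DDE} states the safety-functional inequality on all of $\B$. I would close this gap with a first-exit argument rather than a global hypothesis: let $t^\star = \inf\{t\geq 0 : \mathcal{H}(x_t)<0\}$ and observe that $t \mapsto \mathcal{H}(x_t)$ is continuous, so $x_t \in \mathcal{S}$ throughout $[0,t^\star)$; there the inequality~\eqref{eq:safety_delay} holds and the comparison lemma yields $\mathcal{H}(x_t)\geq \beta(\mathcal{H}(x_0),t)\geq 0$. Continuity at $t^\star$ then forces $\mathcal{H}(x_{t^\star})\geq 0$, contradicting the definition of $t^\star$ as an exit time, so no finite exit time exists and $\mathcal{S}$ is forward invariant w.r.t.~\eqref{eq:closedloop_delay}. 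This is the same nuance that the delay-free proof of Theorem~\ref{thm:safety} defers to its cited references, and it completes the reduction to the autonomous safety-functional theorems.
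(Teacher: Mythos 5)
Your proposal is correct and follows essentially the same route as the paper: substitute the controller to obtain the closed loop \eqref{eq:closedloop_delay} and invoke the autonomous forward-invariance result for safety functionals. You are in fact somewhat more careful than the paper's two-line proof, which cites Theorem~\ref{thm:safe_RFDE} directly even though the closed loop is neutral --- the paper only concedes elsewhere (Remark~\ref{rem:neutral_control} and the proof of Theorem~\ref{thm:safety_delay_extended}) that Corollary~\ref{cor:safe_NFDE} is the precise reference, and it likewise defers the $\mathcal{S}$-versus-$\B$ nuance that your first-exit argument handles explicitly.
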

\begin{proof}[Proof]
Definition~\ref{def:CBFal} ensures that controller $\K$ exists. Then considering the closed-loop dynamics in~\eqref{eq:closedloop_delay}, the set $\mathcal{S}$ is forward invariant according to Theorem~\ref{thm:safe_RFDE}.
\end{proof}

This result motivates the construction of pointwise optimal safety-critical controllers that  use the nearest safe action to a nominal but potentially unsafe controller.
\begin{corollary}\label{cor:min_norm}
\textit{
Given a CBFal $\mathcal{H}$ and a locally Lipschitz continuous desired controller ${\K_{\rm des}\colon \B \times \Q  \to \R^m}$, $u_{\rm des}=\K_{\rm des}(x_t,\dot{x}_t)$ under the condition~\eqref{eq:neutral_Lips}, the following quadratic program (QP) yields a controller ${\K\colon \B \times \Q \to \R^m}$, $u=\K(x_t,\dot{x}_t)$ that renders set $\mathcal{S}$ in \eqref{eq:safesetcond_delay} forward invariant w.r.t.~\eqref{eq:closedloop_delay}:
\begin{equation}\label{eq:QP_delay}
\begin{split}
\K(x_t,\dot{x}_t) =
\underset{u \in \R^m}{\operatorname{argmin}} & \quad \frac{1}{2}\|u - \K_{\rm des}(x_t,\dot{x}_t)\|_2^2  \\
\mathrm{s.t.} & \quad \dot{\mathcal{H}}(x_t,\dot{x}_t,u) \geq - \alpha \big( \mathcal{H}(x_t) \big).
\end{split}
\end{equation}
Furthermore, the explicit solution of~\eqref{eq:QP_delay} can be found by the Karush–Kuhn–Tucker (KKT) \cite{Boyd2004} conditions as:
\begin{equation}\label{eq:min-norm}
\K(x_t,\dot{x}_t)= 
\begin{cases}
\K_{\rm des}(x_t,\dot{x}_t) & {\rm if} \; \phi(x_t,\dot{x}_t) \geq 0,\\
\K_{\rm des}(x_t,\dot{x}_t) -\frac{\phi(x_t,\dot{x}_t) \phi_0^\top(x_t)}{\phi_0(x_t) \phi_0^\top(x_t}  & {\rm otherwise},
\end{cases}
\end{equation}
where $\phi(x_t,\dot{x}_t) = \mathcal{L}_\F  \mathcal{H}(x_t,\dot{x}_t)  + \mathcal{L}_\G  \mathcal{H}(x_t) \, \K_{\rm des}(x_t,\dot{x}_t) + \alpha \left( \mathcal{H}(x_t) \right)$
and $\phi_0(x_t) = \mathcal{L}_\G  \mathcal{H}(x_t)$.
}
\end{corollary}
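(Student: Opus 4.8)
The plan is to split the argument into two independent claims: first, that any minimizer of the QP~\eqref{eq:QP_delay} renders $\mathcal{S}$ forward invariant, and second, that this minimizer admits the closed-form expression~\eqref{eq:min-norm}. The first claim is almost immediate, since the inequality constraint in~\eqref{eq:QP_delay} is verbatim the safety condition~\eqref{eq:safety_delay} of Theorem~\ref{thm:safety_delay}. Hence, provided the QP is feasible and its solution $\K$ is locally Lipschitz continuous, the resulting $\K$ satisfies~\eqref{eq:safety_delay} pointwise for all $x_t \in \mathcal{S}$, and Theorem~\ref{thm:safety_delay} then delivers forward invariance of $\mathcal{S}$ w.r.t.~\eqref{eq:closedloop_delay}. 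Feasibility I would read off from the CBFal condition~\eqref{eq:CBFal_condition} in Definition~\ref{def:CBFal}: whenever $\mathcal{L}_\G \mathcal{H}(x_t) \neq 0$ the affine constraint is satisfiable by some $u$, and when $\mathcal{L}_\G \mathcal{H}(x_t) = 0$ the condition forces $\mathcal{L}_\F \mathcal{H}(x_t,\dot{x}_t) > -\alpha(\mathcal{H}(x_t))$, so $u = \K_{\rm des}(x_t,\dot{x}_t)$ is already feasible.

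For the second claim I would compute the minimizer directly from the KKT conditions. Writing $\phi_0(x_t) = \mathcal{L}_\G \mathcal{H}(x_t)$ and using~\eqref{eq:H_Lie_deriv}, the constraint reads $\phi_0(x_t)\,u + \bigl(\mathcal{L}_\F \mathcal{H}(x_t,\dot{x}_t) + \alpha(\mathcal{H}(x_t))\bigr) \geq 0$. Forming the Lagrangian $\tfrac{1}{2}\|u - \K_{\rm des}(x_t,\dot{x}_t)\|_2^2 - \lambda\,\bigl[\phi_0(x_t)\,u + \mathcal{L}_\F \mathcal{H}(x_t,\dot{x}_t) + \alpha(\mathcal{H}(x_t))\bigr]$ with multiplier $\lambda \geq 0$, stationarity gives $u = \K_{\rm des}(x_t,\dot{x}_t) + \lambda\,\phi_0^\top(x_t)$. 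Complementary slackness then produces two cases: if the unconstrained optimum $u = \K_{\rm des}$ already satisfies the constraint, i.e. $\phi(x_t,\dot{x}_t) \geq 0$, then $\lambda = 0$ and $\K = \K_{\rm des}$; otherwise the constraint is active, and substituting the stationarity expression back yields $\lambda = -\phi(x_t,\dot{x}_t)/(\phi_0(x_t)\phi_0^\top(x_t)) \geq 0$, nonnegative precisely because $\phi < 0$ on this branch. Back-substitution then produces the correction term $-\phi(x_t,\dot{x}_t)\,\phi_0^\top(x_t)/(\phi_0(x_t)\phi_0^\top(x_t))$ in~\eqref{eq:min-norm}. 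Since the objective is strictly convex and the feasible set is a nonempty half-space, this KKT point is the unique global minimizer.

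The main obstacle is the local Lipschitz continuity of the resulting controller $\K$, which is needed both to invoke Theorem~\ref{thm:safety_delay} and to guarantee existence and uniqueness of closed-loop solutions under~\eqref{eq:neutral_Lips}. Two points must be handled. First, the denominator $\phi_0(x_t)\phi_0^\top(x_t)$ in~\eqref{eq:min-norm} must not vanish on the active branch; this is secured by the feasibility argument above, since $\phi_0 = 0$ forces $\phi > 0$ and therefore places $x_t$ in the inactive branch, so the division by zero never actually occurs, mirroring the equivalence~\eqref{eq:CBF_condition_Lgh} of the delay-free case. Second, continuity across the switching surface $\{\phi = 0\}$ must be checked: the correction term tends to zero as $\phi \to 0^-$, so the two branches agree on the boundary, and local Lipschitzness then follows on each branch from the local Lipschitz continuity of $\F$, $\G$, $\K_{\rm des}$, the representation data $w_0, w_j, w_{\rm d}$ of the Fr\'echet derivative, and $\alpha$, together with the standard fact that the pointwise min-norm projection onto a Lipschitz-varying half-space is itself locally Lipschitz, as established for the delay-free QP in~\cite{Molnar2021tcst}. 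Assembling these facts gives a locally Lipschitz $\K$ satisfying~\eqref{eq:safety_delay}, which completes both claims.
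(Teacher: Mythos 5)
Your proposal is correct and follows essentially the same route as the paper: the forward-invariance claim is reduced to Theorem~\ref{thm:safety_delay} since the QP constraint is exactly~\eqref{eq:safety_delay}, and the closed form~\eqref{eq:min-norm} is derived from the KKT conditions with the same two-case split on the multiplier and the same observation that the CBFal condition forces $\phi_0(x_t)=0 \implies \phi(x_t,\dot{x}_t)>0$, so the denominator never vanishes on the active branch (this is the paper's Appendix~\ref{sec:appdx_KKT}). Your additional remarks on continuity across the switching surface and local Lipschitzness of $\K$ go slightly beyond what the paper writes out, but they do not change the structure of the argument.
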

\noindent The derivation of \eqref{eq:min-norm} is detailed in Appendix~\ref{sec:appdx_KKT}.
This allows operating the nominal controller when it is safe ($\phi \geq 0$), and modifies the input to keep the system safe otherwise ($\phi < 0$).

\begin{remark}[(Control of neutral systems)]\label{rem:neutral_control}
Note that while the control system \eqref{eq:affine_sys_delay} contains retarded terms (functionals of $x_t$) only, the closed-loop system \eqref{eq:closedloop_delay} is neutral (includes $\dot{x}_t$).
Therefore, one may extend the theory to neutral control systems of the form:
\begin{equation}\label{eq:affine_sys_neutral}
\dot{x}(t)=\F(x_t,\dot{x}_t) + \G(x_t,\dot{x}_t)u(t),
\end{equation}
where $\F:\B \times \Q \to \R^n$ and $\G:\B \times \Q  \to \R^{n \times m}$  are locally Lipschitz continuous functionals 
in their first argument and continuous in their second.
The corresponding closed-loop system:
\begin{equation}\label{eq:closedloop_neutral}
\dot{x}(t) = \F(x_t,\dot{x}_t) + \G(x_t,\dot{x}_t) \K(x_t,\dot{x}_t)
\end{equation}
is still neutral, and its right-hand side should satisfy the same restrictions as $\F$ in~\eqref{eq:neutral_Lips} to have existence and uniqueness guarantee.
For control design, we further require that $\F(x_t,\dot{x}_t)$ and $\G(x_t,\dot{x}_t)$ do not include $\dot{x}_t(0)=\dot{x}(t)$, i.e., that \eqref{eq:affine_sys_neutral} expresses $\dot{x}(t)$ explicitly (not implicitly).
Then, safety-critical controllers can still be synthesized using \eqref{eq:safety_delay}, where the derivative of $\mathcal{H}$ is:
\begin{equation}\label{eq:H_Lie_deriv_neutral}
 \dot{ \mathcal{H} }(x_t,\dot{x}_t,u) =
\mathcal{L}_\F \mathcal{H} (x_t,\dot{x}_t)
 + 
\mathcal{L}_\G \mathcal{H} (x_t,\dot{x}_t)\,u
\end{equation}
cf.~\eqref{eq:H_Lie_deriv}, with
${\mathcal{L}_\F \mathcal{H} \colon \B \times \Q \to \R}$ 
and ${\mathcal{L}_\G \mathcal{H} \colon \B \times \Q \to \R}$
given by:
\begin{equation}\label{eq:Lie_derivatives_neutral}
\begin{split}
 \mathcal{L}_\F  \mathcal{H} (x_t,\dot{x}_t) = & w_0(x_t) \F(x_t,\dot{x}_t) + 
\sum_{j=1}^l w_j(x_t)\dot{x}_t(-\tau_j) 
+ \int_{-\sigma_1}^{-\sigma_2} w_{\rm d}(x_t,\vartheta) \dot{x}_t(\vartheta) \, {\rm d}\vartheta,
\\
  \mathcal{L}_\G  \mathcal{H} (x_t,\dot{x}_t)= & w_0(x_t)\G(x_t,\dot{x}_t). 
\end{split}
\end{equation}
cf.~\eqref{eq:Lie_derivatives}.
Note that in this case $\mathcal{L}_\F  \mathcal{H}$ may not be independent of $\dot{x}_t$ even if the conditions in Remark~\ref{rem:simplification} hold due to the occurrence of $\dot{x}_t$ in $\F$.
Furthermore, $\mathcal{L}_\G  \mathcal{H}$ also depends on $\dot{x}_t$ through $\G$.
\end{remark}

\begin{corollary}\label{cor:safety_neutral}
\textit{
If $\mathcal{H}$ is a CBFal for \eqref{eq:affine_sys_neutral} on $\mathcal{S}$ defined by~\eqref{eq:safesetcond_delay}, then any locally Lipschitz continuous controller $\K \colon \B \times \Q \to \R^m$, $u=\K(x_t,\dot{x}_t)$ satisfying \eqref{eq:safety_delay}
${\forall x_t \in \mathcal{S}}$ renders set $S$ forward invariant w.r.t.~\eqref{eq:closedloop_neutral}.
}
\end{corollary}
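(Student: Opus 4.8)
The plan is to follow exactly the two-step structure used to prove Theorem~\ref{thm:safety_delay}, replacing the retarded safety certificate (Theorem~\ref{thm:safe_RFDE}) by its neutral counterpart (Corollary~\ref{cor:safe_NFDE}). First I would invoke the CBFal condition \eqref{eq:CBFal_condition}, now read with $\dot{\mathcal{H}}$ given by the neutral expression \eqref{eq:H_Lie_deriv_neutral}, to guarantee that at every state the set of admissible inputs $u$ satisfying the inequality in \eqref{eq:safety_delay} is nonempty. This is precisely what Definition~\ref{def:CBFal} provides, so a controller $\K$ meeting \eqref{eq:safety_delay} exists; local Lipschitz continuity of such a $\K$ (for instance the min-norm selection of Corollary~\ref{cor:min_norm}) is part of the hypothesis.

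The second step would substitute $u = \K(x_t,\dot{x}_t)$ into \eqref{eq:affine_sys_neutral} to obtain the closed-loop neutral system \eqref{eq:closedloop_neutral}, and observe that condition \eqref{eq:safety_delay} then reads $\dot{\mathcal{H}}(x_t,\dot{x}_t) \geq -\alpha\big(\mathcal{H}(x_t)\big)$ evaluated along \eqref{eq:closedloop_neutral}. This is exactly the defining inequality of a safety functional for the neutral dynamics, so $\mathcal{H}$ is a safety functional for \eqref{eq:closedloop_neutral} on $\mathcal{S}$. Applying Corollary~\ref{cor:safe_NFDE} to the closed loop then yields that $\mathcal{S}$ in \eqref{eq:safesetcond_delay} is forward invariant, completing the argument.

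The main obstacle, and the only point where the neutral case genuinely differs from the retarded Theorem~\ref{thm:safety_delay}, is well-posedness of the closed loop. Unlike the retarded setting, existence and uniqueness of solutions of \eqref{eq:closedloop_neutral} is not automatic; it requires the right-hand side to satisfy the modified Lipschitz condition \eqref{eq:neutral_Lips} with contraction constant $L<1$, which is imposed as a hypothesis in Remark~\ref{rem:neutral_control}. I would also use the explicitness assumption from Remark~\ref{rem:neutral_control}, namely that $\F$ and $\G$ do not depend on $\dot{x}_t(0)=\dot{x}(t)$: this keeps $\dot{\mathcal{H}}$ affine in $u$ through \eqref{eq:Lie_derivatives_neutral} and prevents the closed-loop equation from becoming implicit in $\dot{x}(t)$. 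Once these two structural requirements are in place, the comparison-lemma argument underlying Corollary~\ref{cor:safe_NFDE} carries over verbatim, since that argument uses only the scalar inequality $\dot{\mathcal{H}} \geq -\alpha(\mathcal{H})$ together with uniqueness of the solution of the comparison equation $\dot{y} = -\alpha(y)$, neither of which is sensitive to the retarded-versus-neutral distinction.
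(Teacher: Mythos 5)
Your proposal is correct and follows essentially the same route as the paper: the paper proves Theorem~\ref{thm:safety_delay} by noting that Definition~\ref{def:CBFal} guarantees existence of the controller and then invoking the autonomous safety certificate on the closed loop, and Corollary~\ref{cor:safety_neutral} is understood the same way with Corollary~\ref{cor:safe_NFDE} replacing Theorem~\ref{thm:safe_RFDE}. Your added attention to well-posedness of the neutral closed loop (the modified Lipschitz condition \eqref{eq:neutral_Lips} and the explicitness of $\dot{x}(t)$) matches what the paper stipulates in Remark~\ref{rem:neutral_control} rather than deviating from it.
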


In the presence of time delay, the relative degree is affected by the delay itself, which is detailed next.
For simplicity, we omit further discussions on neutral control systems, and the rest of the paper addresses the retarded control system \eqref{eq:affine_sys_delay}.

\subsection{Delay-induced Higher Relative Degree} 
\label{sec:rel_deg}

The CBFal condition~\eqref{eq:CBFal_condition} sufficiently holds if ${\mathcal{L}_\G \mathcal{H}(x_t) \neq 0}$ for all ${\forall x_t \in \B}$ and ${\forall \dot{x}_t \in \Q}$.
We refer to this as $\mathcal{H}$ having {\em relative degree}~$1$.
Relative degree is an important concept for time delay systems, too, which motivates the extension of Definition~\ref{def:reldeg}.
\begin{definition}[\textbf{Relative Degree of Functional}]\label{def:reldeg_delay}
Functional ${\mathcal{H} \colon \B \to \R}$ has \textbf{relative degree} $r$ (where ${r \in \mathbb{Z}}$, ${r \geq 1}$) w.r.t.~\eqref{eq:affine_sys_delay} if it is $r$ times continuously Fr\'{e}chet differentiable and satisfies the two conditions below.
\begin{enumerate}[I.]
\item For ${r \geq 2}$, the Fr\'{e}chet derivative of ${\mathcal{L}_\F^k \mathcal{H}(x_t,\dot{x}_t)}$ w.r.t. $\dot{x}_t$ is zero ${\forall x_t \in \B}$ and ${\forall k \in \{1, \ldots, r-1 \}}$.
That is, ${\mathcal{L}_\F^k \mathcal{H}(x_t,\dot{x}_t)}$ does not depend on $\dot{x}_t$, only on $x_t$, which we denote shortly as ${\mathcal{L}_\F^k \mathcal{H}(x_t)}$.
\item The following holds ${\forall x_t \in \B}$:
\begin{align}
\begin{split}
    \mathcal{L}_\G \mathcal{L}_\F^{r-1} \mathcal{H}(x_t) & \neq 0, \\
    \mathcal{L}_\G \mathcal{L}_\F^k \mathcal{H}(x_t) & = 0, \quad {\rm for} \; r \geq 2, \; k \in \{0, \ldots, r-2 \}, \\
\end{split}
\end{align}
where ${\mathcal{L}_\G \mathcal{L}_\F^0 \mathcal{H}(x_t) = \mathcal{L}_\G \mathcal{H}(x_t)}$ and the second condition only applies for ${r \geq 2}$.
\end{enumerate}
\end{definition}

Condition I is specific to time delay systems and does not have a delay-free counterpart in Definition~\ref{def:reldeg}.
This condition is imposed because otherwise for higher relative degree, $r \geq 2$, higher time derivatives of $\mathcal{H}$ could include higher derivatives of $x_t$.
Synthesizing a controller in this case could lead to a closed-loop system where the rate of change of state depends on past values of higher derivatives of the state, which is called by advanced functional differential equation (AFDE). We demonstrate this by an example in Section~\ref{sec:examp}. Advanced type equations are rarely used in engineering applications due to their inverted causality problem \cite{InspergerStepan2011}.

Definition~\ref{def:reldeg_delay} together with~\eqref{eq:Lie_derivatives} leads to the following observations.
\begin{itemize}
    \item For {\em relative degree~$1$}, ${\mathcal{L}_\G \mathcal{H}(x_t) = w_0(x_t)\G(x_t) \neq 0}$ that implies ${w_0(x_t) \neq 0}$.
This means that $\mathcal{H}(x_t)$ contains $x(t)$
as discussed in Remark~\ref{rem:atomic}, i.e., the {\em CBFal includes the present state $x(t)$}.
    \item For {\em relative degree~$2$}, ${\mathcal{L}_\F \mathcal{H}(x_t)}$ is independent of $\dot{x}_t$.
Per Remark~\ref{rem:simplification}, this implies ${w_j(x_t)=0}$, ${\forall x_t \in \B}$, ${\forall j \geq 1}$, i.e., the {\em CBFal excludes states $x(t-\tau_j)$ with point delay}.
Moreover, ${\mathcal{L}_\G \mathcal{H}(x_t) = w_0(x_t)\G(x_t) = 0}$ and ${\mathcal{L}_\G \mathcal{L}_\F \mathcal{H}(x_t) \neq 0}$.
The former occurs if ${w_0(x_t) = 0}$, i.e., the CBFal does not contain the present state $x(t)$ only a distributed delay term.
Alternatively, the expression of $\G$ may also cause ${\mathcal{L}_\G \mathcal{H}(x_t) = 0}$, even if ${w_0(x_t) \neq 0}$ and the CBFal includes the present state $x(t)$.
    \item There is {\em no valid relative degree} if ${w_0(x_t) = 0}$ and $w_j(x_t) \neq 0$ for some $j \geq 1$, i.e., when the {\em CBFal excludes the present state $x(t)$ but includes the past state ${x(t-\tau_j)}$}. That is the past of a system cannot be rendered safe.
\end{itemize}
\noindent As such, distributed delays in $\mathcal{H}$ may induce higher relative, whereas point delays may lead to no valid relative degree. 

For higher relative degree problems in delay-free systems, controllers can be synthesized by constructing a relative degree~$1$ CBF from $h$ and its derivatives; see the methods in \cite{Nguyen2016, Xiao2019, sarkar2020highrelative, Wang2020LearningCB}.
This process can be extended to time delay systems, which we demonstrate for relative degree~$2$ (the process of extended to higher relative degrees is similar, but more notationally intensive, so this case suffices to illustrate the process without loss of generality).
We consider ${\mathcal{L}_\G  \mathcal{H}(x_t) = 0}$ and that  ${\mathcal{L}_\F  \mathcal{H}}$ does not contain terms of $\dot{x}_t$.
We introduce the {\em extended control barrier functional}:
\begin{equation}\label{eq:ext_safety_func}
\begin{split}
    &  \mathcal{H}_{\rm e}(x_t)  = 
    \underbrace{\dot{ \mathcal{H}}(x_t)}_
    { \mathcal{L}_\F \mathcal{H}(x_t)}
+\alpha \big( \mathcal{H}(x_t)\big),
\end{split}
\end{equation}
that has relative degree~$1$ if $\mathcal{H}$ has relative degree~$2$ since ${\mathcal{L}_\G \mathcal{L}_\F \mathcal{H}(x_t) \neq 0}$.
Its 0-superlevel set is denoted by:
\begin{equation}\label{eq:safesetcond_extended}
    \mathcal{S}_{\rm e} = \big\{x_t \in \B : \mathcal{H}_{\rm e}(x_t) \geq 0 \big\}.
\end{equation}

\begin{definition}[\textbf{Extended Control Barrier Functional}]\label{def:eCBFal}
Let ${\mathcal{H} \colon \B \to \R}$ be a twice continuously Fr\'{e}chet differentiable functional with continuously differentiable ${\alpha \in \Keinf}$, ${\mathcal{L}_\G  \mathcal{H}(x_t) = 0}$ and continuously differentiable ${\mathcal{L}_\F \mathcal{H}}$ excluding terms of $\dot{x}_t$.
Then functional ${\mathcal{H}_{\rm e} \colon \B \to \R}$ defined by~\eqref{eq:ext_safety_func} is a \textbf{extended control barrier functional} (extended CBFal) for \eqref{eq:affine_sys_delay} on $\mathcal{S} \cap \mathcal{S}_{\rm e}$ defined by~\eqref{eq:safesetcond_delay} and~\eqref{eq:safesetcond_extended},
if there exists ${\alpha_{\rm e} \in \Keinf}$ such that ${\forall x_t \in \B}$:
\begin{equation}\label{eq:CBFal_condition_extended}
\sup_{u\in \R^m} \dot{\mathcal{H}}_{\rm e}(x_t,\dot{x}_t,u) > - \alpha_{\rm e} \big( \mathcal{H}_{\rm e}(x_t) \big),
\end{equation}
where
\begin{equation}\label{eq:He_Lie_deriv}
 \dot{ \mathcal{H} }_{\rm e}(x_t,\dot{x}_t,u) =
 \mathcal{L}^2_\F \mathcal{H}(x_t,\dot{x}_t) 
+ \mathcal{L}_\G  \mathcal{L}_\F  \mathcal{H}(x_t) \,u
+ \alpha' \big( \mathcal{H}(x_t)\big) \mathcal{L}_\F \mathcal{H}(x_t),
\end{equation}
is the derivative of $\mathcal{H}_{\rm e}$ along \eqref{eq:affine_sys_delay}, given by $\dot{x}_t$ in~\eqref{eq:state_derivative_control} and the functionals ${\mathcal{L}^2_\F \mathcal{H} \colon \B \times \Q \to \R}$ and ${\mathcal{L}_\G \mathcal{L}_\F \mathcal{H} \colon \B \to \R}$.
\end{definition}

Note that condition~\eqref{eq:CBFal_condition_extended} sufficiently holds if ${\mathcal{L}_\G  \mathcal{L}_\F  \mathcal{H}(x_t) \neq 0}$, ${\forall x_t \in \B}$, i.e., in case $\mathcal{H}$ has relative degree~$2$.
With this definition we can state the theorem to ensure safety for systems with (potentially delay-induced) relative degree~$2$.
\begin{theorem}\label{thm:safety_delay_extended}
\textit{
If $\mathcal{H}_{\rm e}$ is an extended CBFal for \eqref{eq:affine_sys_delay} on $\mathcal{S} \cap \mathcal{S}_{\rm e}$ defined by~\eqref{eq:safesetcond_delay} and~\eqref{eq:safesetcond_extended}, then any locally Lipschitz continuous controller ${\K \colon \B \times \Q \to \R^m}$, ${u=\K(x_t,\dot{x}_t)}$ satisfying:
\begin{equation}
\dot{ \mathcal{H}}_{\rm e}(x_t,\dot{x}_t,\K(x_t,\dot{x}_t)) \geq -\alpha_{\rm e} \big( \mathcal{H}_{\rm e}(x_t)\big),
\end{equation}
${\forall x_t \in \mathcal{S} \cap \mathcal{S}_{\rm e}}$ renders set ${\mathcal{S} \cap \mathcal{S}_{\rm e}}$ forward invariant w.r.t.~\eqref{eq:closedloop_delay}.
}
\end{theorem}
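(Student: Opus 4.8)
The plan is to reduce the statement to two successive applications of the safety-functional result, Theorem~\ref{thm:safe_RFDE}, exploiting the fact that by Definition~\ref{def:eCBFal} the extended functional $\mathcal{H}_{\rm e}$ has relative degree~$1$ (since $\mathcal{L}_\G \mathcal{L}_\F \mathcal{H}(x_t) \neq 0$), so it plays exactly the role of the relative-degree-$1$ CBFal in Theorem~\ref{thm:safety_delay}. First I would note that Definition~\ref{def:eCBFal}, through~\eqref{eq:CBFal_condition_extended}, guarantees the existence of a locally Lipschitz controller $\K$ for which the closed-loop derivative satisfies $\dot{\mathcal{H}}_{\rm e}(x_t,\dot{x}_t,\K(x_t,\dot{x}_t)) \geq -\alpha_{\rm e}(\mathcal{H}_{\rm e}(x_t))$. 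Along the closed-loop trajectory this makes $\mathcal{H}_{\rm e}$ a safety functional in the sense of Definition~\ref{def:SF_DDE}, so Theorem~\ref{thm:safe_RFDE} yields the forward invariance of $\mathcal{S}_{\rm e}$: a trajectory started in $\mathcal{S}_{\rm e}$ keeps $\mathcal{H}_{\rm e}(x_t) \geq 0$ for all $t \geq 0$.

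The second step transfers this conclusion to $\mathcal{S}$. While $x_t \in \mathcal{S}_{\rm e}$ we have $\mathcal{H}_{\rm e}(x_t) \geq 0$, and substituting the definition~\eqref{eq:ext_safety_func} gives $\mathcal{L}_\F \mathcal{H}(x_t) + \alpha(\mathcal{H}(x_t)) \geq 0$, i.e., $\dot{\mathcal{H}}(x_t) \geq -\alpha(\mathcal{H}(x_t))$. This is precisely the safety-functional inequality~\eqref{eq:safecondDDE} for $\mathcal{H}$ holding along the trajectory, so a second invocation of Theorem~\ref{thm:safe_RFDE} gives the forward invariance of $\mathcal{S}$. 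Combining the two conclusions, a trajectory initialized in $\mathcal{S} \cap \mathcal{S}_{\rm e}$ remains in both sets for all $t \geq 0$, which is the claimed forward invariance of the intersection.

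The step I expect to be the main obstacle is the coupling between the two invariance arguments, hidden in their domains of validity: the controller satisfies its inequality only on $\mathcal{S} \cap \mathcal{S}_{\rm e}$ (not on all of $\B$), whereas the second step needs $x_t$ to remain in $\mathcal{S}_{\rm e}$ while the first step implicitly relies on $x_t$ remaining in $\mathcal{S}$, so the two claims are not a priori independent. To close this cleanly I would run a boundary (Nagumo-type) argument on $\partial(\mathcal{S} \cap \mathcal{S}_{\rm e})$: if $\mathcal{H}_{\rm e}(x_t) = 0$ with $\mathcal{H}(x_t) \geq 0$, then the controller condition together with $\alpha_{\rm e}(0)=0$ gives $\dot{\mathcal{H}}_{\rm e}(x_t) \geq 0$, so the trajectory cannot exit through $\partial \mathcal{S}_{\rm e}$; and if $\mathcal{H}(x_t) = 0$ with $\mathcal{H}_{\rm e}(x_t) \geq 0$, then~\eqref{eq:ext_safety_func} with $\alpha(0)=0$ gives $\dot{\mathcal{H}}(x_t) = \mathcal{H}_{\rm e}(x_t) \geq 0$, so it cannot exit through $\partial \mathcal{S}$. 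This simultaneous boundary analysis removes the apparent circularity and legitimizes the two comparison-lemma arguments above; modulo this point, the result is, exactly as in Theorem~\ref{thm:safety_delay}, a direct corollary of Theorem~\ref{thm:safe_RFDE}.
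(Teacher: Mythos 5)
Your proof follows essentially the same route as the paper's: first establish forward invariance of $\mathcal{S}_{\rm e}$ from the controller condition (the paper invokes Theorem~\ref{thm:safety_delay} for this step), then use ${\mathcal{H}_{\rm e}(x_t)\geq 0}$ together with~\eqref{eq:ext_safety_func} to obtain ${\dot{\mathcal{H}}(x_t)\geq -\alpha(\mathcal{H}(x_t))}$ and apply Theorem~\ref{thm:safe_RFDE} (more precisely Corollary~\ref{cor:safety_neutral}) to conclude invariance of $\mathcal{S}$. The coupling issue you flag --- that the controller inequality is only assumed on ${\mathcal{S}\cap\mathcal{S}_{\rm e}}$ rather than on all of $\mathcal{S}_{\rm e}$, so the two invariance claims are not a priori independent --- is genuine but is silently glossed over in the paper's proof, so your Nagumo-type boundary remark is an added refinement of the same argument rather than a different approach.
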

\begin{proof}[Proof]
We prove that ${x_0 \in \mathcal{S} \cap \mathcal{S}_{\rm e} \implies x_t \in \mathcal{S} \cap \mathcal{S}_{\rm e}}$, ${\forall t \geq 0}$.
Note that ${x_0 \in \mathcal{S} \cap \mathcal{S}_{\rm e}}$ yields both ${x_0 \in \mathcal{S}}$ and ${x_0 \in \mathcal{S}_{\rm e}}$.
By Theorem~\ref{thm:safety_delay}, we have ${x_0 \in \mathcal{S}_{\rm e} \implies x_t \in \mathcal{S}_{\rm e}}$, ${\forall t \geq 0}$.
This means ${\mathcal{H}_{\rm e}(x_t) \geq 0}$ holds, i.e., ${\dot{ \mathcal{H}}(x_t) \geq - \alpha \big( \mathcal{H}(x_t) \big)}$ based on \eqref{eq:ext_safety_func}.
Therefore, applying Theorem~\ref{thm:safe_RFDE} (or more precisely, Corollary~\ref{cor:safety_neutral}) yields ${x_0 \in \mathcal{S} \cap \mathcal{S}_{\rm e} \implies x_t \in \mathcal{S} \cap \mathcal{S}_{\rm e}}$, ${\forall t \geq 0}$.
\end{proof}

With this theorem, one can design a pointwise optimal controller similarly as in \eqref{eq:min-norm}.

\begin{corollary}\label{cor:min_norm_extended}
\textit{
Given an extended CBFal $\mathcal{H}_{\rm e}$ and a locally Lipschitz continuous desired controller ${\K_{\rm des}\colon \B \times \Q \to \R^m}$, ${u_{\rm des}=\K_{\rm des}(x_t,\dot{x}_t)}$, the following quadratic program (QP) yields a controller ${\K\colon \B \times \Q \to \R^m}$, ${u=\K(x_t,\dot{x}_t)}$ that renders set ${\mathcal{S} \cap \mathcal{S}_{\rm e}}$ in \eqref{eq:safesetcond_delay} and \eqref{eq:safesetcond_extended} forward invariant w.r.t.~\eqref{eq:closedloop_delay}:
\begin{equation}\label{eq:QP_delay_extended}
\begin{split}
\K(x_t,\dot{x}_t) =
\underset{u \in \R^m}{\operatorname{argmin}} & \quad \frac{1}{2}\|u - \K_{\rm des}(x_t,\dot{x}_t)\|_2^2  \\
\mathrm{s.t.} & \quad \dot{\mathcal{H}}_{\rm e}(x_t,\dot{x}_t,u) \geq - \alpha_{\rm e} \big( \mathcal{H}_{\rm e}(x_t) \big).
\end{split}
\end{equation}
Furthermore, the explicit solution of~\eqref{eq:QP_delay_extended} can be found by the Karush–Kuhn–Tucker (KKT) \cite{Boyd2004} conditions as:
\begin{equation}\label{eq:min-norm_extended}
\K(x_t,\dot{x}_t)= 
\begin{cases}
\K_{\rm des}(x_t,\dot{x}_t) & {\rm if} \; \phi_{\rm e}(x_t,\dot{x}_t) \geq 0,\\
\K_{\rm des}(x_t,\dot{x}_t) -\frac{\phi_{\rm e}(x_t,\dot{x}_t) \phi_{0{\rm e}}^\top(x_t)}{\phi_{0{\rm e}}(x_t) \phi_{0{\rm e}}^\top(x_t)}  & {\rm otherwise},
\end{cases}
\end{equation}
where ${\phi_{\rm e}(x_t,\dot{x}_t) =\mathcal{L}^2 _\F  \mathcal{H}(x_t,\dot{x}_t)  + \mathcal{L}_\G  \mathcal{L}_\F \mathcal{H}(x_t) \, \K_{\rm des}(x_t,\dot{x}_t) + \alpha' \big( \mathcal{H}(x_t)\big) \mathcal{L}_\F \mathcal{H}(x_t) + \alpha_{\rm e}\left( \mathcal{H}_{\rm e}(x_t) \right)}$
and ${\phi_{0{\rm e}}(x_t) =  \mathcal{L}_\G \mathcal{L}_\F \mathcal{H}(x_t)}$.
}
\end{corollary}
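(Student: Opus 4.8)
The plan is to establish the two assertions of the corollary in turn: that the QP~\eqref{eq:QP_delay_extended} produces a valid safety-enforcing controller, and that its optimizer admits the closed form~\eqref{eq:min-norm_extended}. Both follow the same template as Corollary~\ref{cor:min_norm} (whose derivation is carried out in Appendix~\ref{sec:appdx_KKT}), with $\mathcal{H}$, $\phi$, $\phi_0$, $\alpha$ replaced throughout by $\mathcal{H}_{\rm e}$, $\phi_{\rm e}$, $\phi_{0{\rm e}}$, $\alpha_{\rm e}$, and with Theorem~\ref{thm:safety_delay_extended} playing the role of Theorem~\ref{thm:safety_delay}. The only structural input I need is that $\dot{\mathcal{H}}_{\rm e}(x_t,\dot{x}_t,u)$ in~\eqref{eq:He_Lie_deriv} is affine in $u$, with the coefficient of $u$ equal to $\phi_{0{\rm e}}(x_t)=\mathcal{L}_\G\mathcal{L}_\F\mathcal{H}(x_t)$.

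For the explicit solution I would first rewrite the QP constraint. Substituting~\eqref{eq:He_Lie_deriv} and adding and subtracting $\mathcal{L}_\G\mathcal{L}_\F\mathcal{H}(x_t)\,\K_{\rm des}(x_t,\dot{x}_t)$, the constraint $\dot{\mathcal{H}}_{\rm e}\geq-\alpha_{\rm e}(\mathcal{H}_{\rm e})$ becomes exactly $\phi_{0{\rm e}}(x_t)\big(u-\K_{\rm des}(x_t,\dot{x}_t)\big)+\phi_{\rm e}(x_t,\dot{x}_t)\geq 0$. Writing $v=u-\K_{\rm des}(x_t,\dot{x}_t)$, the program reduces to minimizing $\tfrac12\|v\|_2^2$ over the half-space $\{v:\phi_{0{\rm e}}v\geq-\phi_{\rm e}\}$, i.e. the Euclidean projection of the origin onto that half-space. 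I would then form the Lagrangian $\tfrac12\|v\|_2^2-\lambda(\phi_{0{\rm e}}v+\phi_{\rm e})$ and apply the KKT conditions: stationarity gives $v=\lambda\,\phi_{0{\rm e}}^\top$, and complementary slackness together with dual feasibility $\lambda\geq 0$ splits into two cases. If $\phi_{\rm e}\geq 0$ the constraint is inactive, $\lambda=0$ and $v=0$, so $u=\K_{\rm des}$; if $\phi_{\rm e}<0$ the constraint is active, solving $\phi_{0{\rm e}}v+\phi_{\rm e}=0$ gives $\lambda=-\phi_{\rm e}/(\phi_{0{\rm e}}\phi_{0{\rm e}}^\top)>0$ and hence $v=-\phi_{\rm e}\phi_{0{\rm e}}^\top/(\phi_{0{\rm e}}\phi_{0{\rm e}}^\top)$, which is precisely~\eqref{eq:min-norm_extended}. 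Here the denominator $\phi_{0{\rm e}}\phi_{0{\rm e}}^\top=\|\phi_{0{\rm e}}\|_2^2$ is nonzero because the relative-degree-$2$ hypothesis forces $\mathcal{L}_\G\mathcal{L}_\F\mathcal{H}(x_t)\neq 0$.

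For forward invariance I would argue that the extended CBFal property~\eqref{eq:CBFal_condition_extended} guarantees the half-space is nonempty for every $x_t$, so the strictly convex QP is feasible, its minimizer is unique and well defined, and it satisfies the constraint required by Theorem~\ref{thm:safety_delay_extended}; that theorem then yields forward invariance of $\mathcal{S}\cap\mathcal{S}_{\rm e}$. The step I expect to be the main obstacle is not the algebra but verifying the admissibility hypothesis of Theorem~\ref{thm:safety_delay_extended}, namely the \emph{local Lipschitz continuity} of the resulting controller. The closed form~\eqref{eq:min-norm_extended} is a composition of the (continuous and locally Lipschitz) data $\mathcal{L}_\F^2\mathcal{H}$, $\mathcal{L}_\G\mathcal{L}_\F\mathcal{H}$, $\mathcal{L}_\F\mathcal{H}$, $\alpha'$, $\alpha_{\rm e}$, $\mathcal{H}$ and $\K_{\rm des}$, which is smooth away from the switching surface $\phi_{\rm e}=0$; the delicate point is continuity \emph{across} that surface, where one checks that the subtracted correction term vanishes as $\phi_{\rm e}\to 0^-$ so that the two branches agree, and that the nonvanishing denominator keeps the correction locally Lipschitz. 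This is the same regularity argument used for the delay-free min-norm CBF-QP and for Corollary~\ref{cor:min_norm}, so it transfers verbatim once $\phi_{0{\rm e}}\neq 0$ has been secured from the relative-degree-$2$ assumption.
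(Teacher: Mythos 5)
Your proposal is correct and follows essentially the same route as the paper, which simply invokes Theorem~\ref{thm:safety_delay_extended} for forward invariance and repeats the KKT derivation of Appendix~\ref{sec:appdx_KKT} with $\phi$, $\phi_0$, $\alpha$ replaced by $\phi_{\rm e}$, $\phi_{0{\rm e}}$, $\alpha_{\rm e}$. The only minor difference is that the paper justifies the nonvanishing denominator via the extended CBFal condition itself (i.e., $\phi_{0{\rm e}}(x_t)=0\implies\phi_{\rm e}(x_t,\dot{x}_t)>0$, so the division is only performed when $\phi_{0{\rm e}}\neq 0$), whereas you appeal to the relative-degree-$2$ hypothesis; both close the argument.
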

The detailed derivation of \eqref{eq:min-norm_extended} is similar to~\eqref{eq:min-norm} which is discussed in Appendix~\ref{sec:appdx_KKT}.

\section{Examples and application}\label{sec:examp}

Now we apply the theoretical constructions of this paper on demonstrative examples.
First, we address systems with point delay, second, we discuss a scalar control system with different types of delay.

\begin{eexample} 
Consider the following affine control system with point delay $\tau>0$:
\begin{equation}\label{eq:example_DDE}
\dot{x}(t) = f \big( x(t), x(t-\tau) \big) + g \big( x(t), x(t-\tau) \big) u(t),
\end{equation}
where $x \in \R^n$, $u \in \R^m$, while $f:\R^n \times \R^n \to \R^n$ and $g:\R^n \times \R^n \to \R^{n \times m}$ are locally Lipschitz continuous.
To keep this system safe, we seek to enforce:
\begin{equation}
h \big( x(t), x(t-\tau) \big) \geq 0, \quad \forall t \geq 0,
\end{equation}
along the solutions of the corresponding closed control loop, where $h:\R^n \times \R^n \to \R$ is continuously differentiable.

System~\eqref{eq:example_DDE} and the corresponding safe set can be rewritten in the form as~\eqref{eq:affine_sys_delay} and~\eqref{eq:safesetcond_delay} with the functionals $\F \colon \B \to \R^n$, $\G \colon \B \to \R^{n \times m}$ and $\mathcal{H} \colon \B \to \R$ defined by:
\begin{equation}
\begin{split}
    \F(x_t) & = f \big( x_t(0), x_t(-\tau) \big),\\
    \G(x_t) & = g \big( x_t(0), x_t(-\tau) \big),\\
    \mathcal{H}(x_t) & = h \big( x_t(0), x_t(-\tau) \big).
\end{split}
\end{equation}
This choice of functional $\mathcal{H}$ is a special case of that in \eqref{eq:safe_funnal_gen_exa} in Example~\ref{example:functional}.
The time derivative of $\mathcal{H}$ becomes:
\begin{equation}
\begin{split}
\dot{\mathcal{H}}(x_t,\dot{x}_t,u) = 
\nabla_0 h \big( x_t(0), x_t(-\tau) \big)  \Big( f \big( x_t(0), x_t(-\tau) \big) + g \big( x_t(0), x_t(-\tau) \big) u(t) \Big) 
+ \nabla_1 h \big( x_t(0), x_t(-\tau) \big)  \dot{x}_t(-\tau),
\end{split}
\end{equation}
cf.~\eqref{eq:safe_funnal_der_exa}, where $\nabla_j$ denotes the gradient with respect to the $j$-th vector-valued variable.
This expression corresponds to~\eqref{eq:H_Lie_deriv} with:
\begin{equation}
\begin{split}
\mathcal{L}_\F \mathcal{H}(x_t,\dot{x}_t) =& 
\nabla_0 h \big( x_t(0), x_t(-\tau) \big)  f \big( x_t(0), x_t(-\tau) \big) + 
\nabla_1 h \big( x_t(0), x_t(-\tau) \big)  \dot{x}_t(-\tau),\\
\mathcal{L}_\G \mathcal{H}(x_t) = & 
\nabla_0 h \big( x_t(0), x_t(-\tau) \big)  g \big( x_t(0), x_t(-\tau) \big),
\end{split}
\end{equation}
and the weights
${w_0(x_t)=\nabla_0 h \big( x_t(0), x_t(-\tau) \big)}$,
${w_1(x_t)=\nabla_1 h \big( x_t(0), x_t(-\tau) \big)}$ and
${w_{\rm d}(x_t,\vartheta)=0}$
in \eqref{eq:Lie_derivatives}.
Thereafter, safety-critical controllers can be designed based on \eqref{eq:safety_delay} in Theorem~\ref{thm:safety_delay}, such the quadratic program-based controller \eqref{eq:min-norm}.
\end{eexample} 

\begin{eexample}
Consider the scalar control system:
\begin{equation}\label{eq:scalarsystem}
\dot{x}(t)= 
\underbrace{
x^3(t)
}_{\F(x_t)}
+ 
\underbrace{
x(t-\tau)
}_{\G(x_t)}
u(t),
\end{equation}
where ${x \in \R}$, ${u \in \R}$.
Notice that this system is not forward complete and for some controllers (including $u(t) \equiv 0$) the solution may have finite escape time.
Hence, we seek to keep the state within safe bounds.
We compare different types of CBFals, that include delay-free, point delay and distributed delay terms, as special cases of the functional in \eqref{eq:safe_funnal_gen_exa} in Example~\ref{example:functional}.  

{\em Case 1:}
First, we intend to keep the solution $x(t)$ within ${[-1,\, 1]}$ by constructing the CBFal with delay-free term as:
\begin{equation}\label{eq:h_scalarsystem_1}
\mathcal{H}(x_t) = 1 - x^2(t).
\end{equation}
This corresponds to \eqref{eq:safe_funnal_gen_exa} with ${h(s,\dots)=1-s^2}$, and the time derivative of the CBFal reads:
\begin{equation}
\dot{\mathcal{H}}(x_t,\dot{x}_t,u)=-2x(t)\left(x^3(t)+x(t-\tau)u\right)
\end{equation}
with weights ${w_0(x_t)=-2x(t)}$, ${w_j(x_t)=0}$, ${w_{\rm d}(x_t,\vartheta)=0}$, while the expressions
in \eqref{eq:Lie_derivatives} become:
\begin{equation}
\begin{split}
    \mathcal{L}_\F \mathcal{H}(x_t,\dot{x}_t) = & -2 x^4(t), \\
    \mathcal{L}_\G \mathcal{H}(x_t) = & -2x(t) x(t-\tau).
\end{split}
\end{equation}

We apply Theorem~\ref{thm:safety_delay} to ensure safety and implement the QP-based controller~\eqref{eq:min-norm} using the desired controller ${\K_{\rm des}(x_t,\dot{x}_t)=0}$ and a linear class-$\Keinf$ function ${\alpha(r) = \gamma r}$ with ${\gamma>0}$,
that results in:
\begin{equation}\label{}
\K(x_t,\dot{x}_t)= 
\begin{cases}
0 & {\rm if} \; -2x^4(t)+\gamma \left(1-x^2(t) \right) \geq 0,\\
\frac{-2x^4(t)+\gamma \left(1-x^2(t)\right)}{2x(t) x(t-\tau)}  & {\rm otherwise}.
\end{cases}
\end{equation}
Substitution back into \eqref{eq:scalarsystem} leads to the closed-loop system that forms the differential equation:
\begin{equation}\label{eq:closed_loop_scalar1}
\dot{x}(t)= 
\begin{cases}
x^3(t) & {\rm if} \; -2x^4(t)+\gamma \left(1-x^2(t) \right) \geq 0,\\
\frac{\gamma \left(1-x^2(t)\right)}{2x(t)}  & {\rm otherwise}.
\end{cases}
\end{equation}
It is important that although in this example the delayed term dropped from the closed-loop system,
this is not the case in general, and it depends on the form of \eqref{eq:scalarsystem} and $\mathcal{H}$.

The performance of this controller is demonstrated by  numerically integrating~\eqref{eq:closed_loop_scalar1}.
Simulation results are plotted in the first column of Fig.~\ref{fig:scalar} (panels (a),(d),(g),(j))  for $\tau=1$, ${\gamma=1}$ and initial condition
${x_0(\vartheta)=0.4}$, ${\vartheta \in [-\tau,0]}$. The panel (a) shows the evolution of the state, and the panel (d) indicates the corresponding control input.
As the state gets close to the safe set boundary (indicated by green line), the controller needs to intervene by deviating from the desired zero input and forces the system to evolve within the safe set.
Intervention starts when the state reaches the switching surface ${-2x^4(t)+\gamma \left(1-x^2(t) \right) = 0}$ (see the dashed line in panel (a)).
Panels (g), (j) at the bottom indicate that safety is successfully maintained as $\mathcal{H}$ is positive for all time while the trajectory in the corresponding phase portrait is kept within the safe set.

{\em Case 2:}
Next, we consider another safe set and intend to keep the squared mean of the solution $x(t)$ and its delayed value ${x(t-\tau)}$ below $1$
by constructing the CBFal:
\begin{equation}\label{eq:h_scalarsystem}
\mathcal{H}(x_t) = 1 - \frac{1}{2}\left(x^2(t) + x^2(t-\tau)\right),
\end{equation}
corresponding to the nonlinear function $h$ in \eqref{eq:safe_funnal_gen_exa} with ${h(s_1,s_2,\dots)=1-\frac{1}{2}(s_1^2+s_2^2)}$.
The derivative of the CBFal becomes:
\begin{equation}
\dot{\mathcal{H}}(x_t,\dot{x}_t,u)=-x(t)\left(x^3(t)+x(t-\tau)u\right) -x(t-\tau)\dot{x}(t-\tau),
\end{equation}
associated with weights ${w_0(x_t)=-x(t)}$, ${w_j(x_t)=-x(t-\tau)}$, ${w_{\rm d}(x_t,\vartheta)=0}$, leading to:
\begin{equation}
\begin{split}
    \mathcal{L}_\F \mathcal{H}(x_t,\dot{x}_t) = & - x^4(t) - x(t-\tau) \dot{x}(t-\tau), \\
    \mathcal{L}_\G \mathcal{H}(x_t) = & -x(t) x(t-\tau),
\end{split}
\end{equation}
cf.~\eqref{eq:Lie_derivatives}.

Implementing the QP-based controller~\eqref{eq:min-norm} with ${\K_{\rm des}(x_t,\dot{x}_t)=0}$ and ${\alpha(r) = \gamma r}$ results in:
\begin{equation}\label{}
\K(x_t,\dot{x}_t)= 
\begin{cases}
0 & {\rm if} \; - x^4(t) - x(t-\tau) \dot{x}(t-\tau)+\gamma \left(1 - \frac{1}{2}\left(x^2(t) + x^2(t-\tau)\right)\right) \geq 0,\\
\frac{- x^4(t) - x(t-\tau) \dot{x}(t-\tau)+\gamma \left(1 - \frac{1}{2}\left(x^2(t) + x^2(t-\tau)\right)\right)}{x(t) x(t-\tau)}  & {\rm otherwise},
\end{cases}
\end{equation}
while the closed-loop system becomes a neutral delay differential equation:
\begin{equation}\label{eq:closed_loop_scalar2}
\dot{x}(t)= 
\begin{cases}
x^3(t) & {\rm if} \; - x^4(t) - x(t-\tau) \dot{x}(t-\tau)+\gamma \left(1 - \frac{1}{2}\left(x^2(t) + x^2(t-\tau)\right)\right) \geq 0,\\
\frac{ - x(t-\tau) \dot{x}(t-\tau)+\gamma \left(1 - \frac{1}{2}\left(x^2(t) + x^2(t-\tau)\right)\right)}{x(t)}   & {\rm otherwise}.
\end{cases}
\end{equation}
The time derivative of the delayed term, ${\dot{x}(t-\tau)}$, appears in the expressions of both the right-hand side and the switching surface.

The middle column of Fig.~\ref{fig:scalar} (panels (b),(e),(h),(k)) plots simulation results for system \eqref{eq:closed_loop_scalar2} with $\tau=1$, $\gamma=1$ and initial conditions
${x_0(\vartheta)=0.4}$, ${\vartheta \in [-\tau,0]}$ and
${\dot{x}_0(\vartheta)=0}$, ${\vartheta \in [-\tau,0)}$. Again, the safety-critical controller keeps the system within the safe set for all time as guaranteed by Theorem~\ref{thm:safety_delay}. This can be clearly seen from the phase portrait in panel (k). 

\begin{figure}
\centerline{\includegraphics[width=0.9\columnwidth]{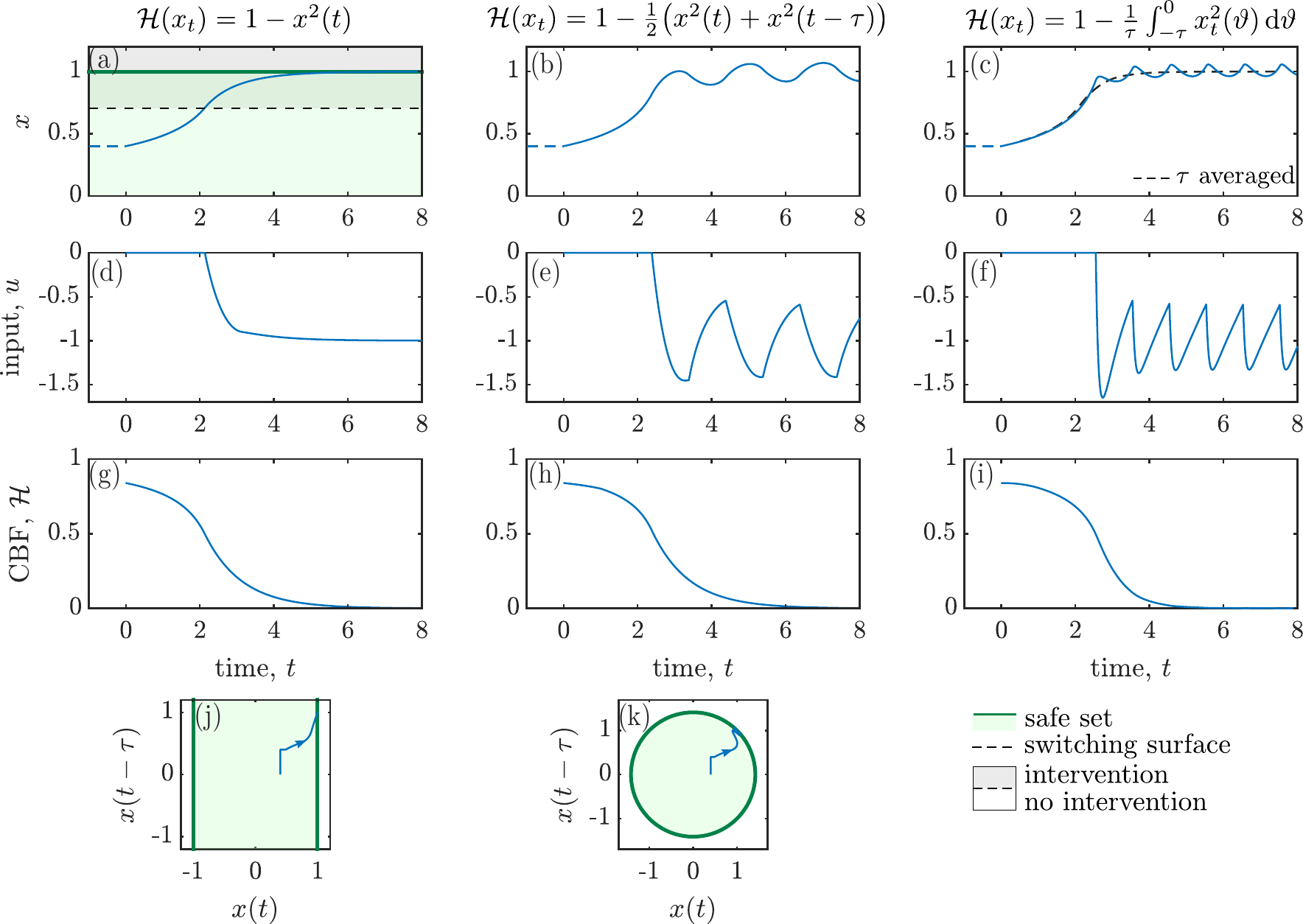}}
\caption{
Safety-critical control of system~\eqref{eq:scalarsystem} with different types of CBFals.
The left column corresponds to {\em Case 1} with delay-free CBFal~\eqref{eq:h_scalarsystem_1}, the middle column shows the results of {\em Case 2} with point delay in the CBFal~\eqref{eq:h_scalarsystem}, while the right column represents {\em Case 3} with distributed delay in the functional~\eqref{eq:distributed_safety_func}.
Panels (a), (b), (c) show the evolution of the state for the three different scenarios, panels (d), (e), (f) plot the value of the synthesized control inputs that maintain safety, panels (g), (h), (i) depict the value of the corresponding functional $\mathcal{H}$, while panels (j), (k) show state space plots.
}
\label{fig:scalar}
\end{figure}

{\em Case 3:}
Then, we intend to keep a moving average of the solution (i.e., a root-mean-square average over the delay interval) below $1$
by constructing the CBFal candidate with distributed delay as:
\begin{equation}\label{eq:distributed_safety_func}
\mathcal{H}(x_t) = 1 -  \frac{1}{\tau}\int_{-\tau}^0 x^2(t+\vartheta) \, {\rm d} \vartheta.
\end{equation}
This is a special case of \eqref{eq:safe_funnal_gen_exa} with ${h(\dots,s)=s}$, ${\rho(\vartheta) \equiv \frac{1}{\tau}}$ and ${\kappa(s)=1-s^2}$.
Taking the time derivative of $\mathcal{H}$ and simplifying the integral as in Remark~\ref{rem:simplification}, we get:
\begin{equation}
\begin{split}
\dot{\mathcal{H}}(x_t) = \frac{1}{\tau}\big( x^2(t-\tau) - x^2(t)\big),
\end{split}
\end{equation}
associated with:
\begin{equation}
\begin{split}
    \mathcal{L}_\F \mathcal{H}(x_t) = & \frac{1}{\tau}\big( x^2(t-\tau) - x^2(t)\big), \\
    \mathcal{L}_\G \mathcal{H}(x_t) = & 0.
\end{split}
\end{equation}

Observe that $\dot{\mathcal{H}}$ depends only on $x_t$, as emphasized by the notation ${\dot{\mathcal{H}}(x_t)}$.
Therefore, functional $\mathcal{H}$ in \eqref{eq:distributed_safety_func} is not a valid CBFal because ${\mathcal{L}_\G \mathcal{H}(x_t) = 0}$ and
$u$ does not appear in $\dot{\mathcal{H}}$.
However, since $\dot{x}_t$ does not appear in $\dot{\mathcal{H}}$
either, one may construct the extended CBFal in Definition~\ref{def:eCBFal}.
Based on \eqref{eq:ext_safety_func}, we propose the following extended CBFal:
\begin{equation}
    \mathcal{H}_{\rm e}(x_t) = \frac{1}{\tau}\big( x^2(t-\tau) - x^2(t)\big) + \gamma \left( 1 -  \frac{1}{\tau}\int_{-\tau}^0 x^2(t+\vartheta) \, {\rm d} \vartheta \right),
\end{equation}
whose time derivative is:
\begin{equation}
\dot{\mathcal{H}}_{\rm e}(x_t,\dot{x}_t,u) = \frac{2}{\tau}x(t-\tau) \dot{x}(t-\tau) 
-\frac{2}{\tau}x^4(t) - \frac{2}{\tau}x(t)x(t-\tau) u + \frac{\gamma}{\tau}\big( x^2(t-\tau) - x^2(t)\big),
\end{equation}
cf.~\eqref{eq:He_Lie_deriv}.
This depends on the control input $u$, and it is associated with:
\begin{equation}
\begin{split}
\mathcal{L}^2 _\F \mathcal{H}(x_t,\dot{x}_t) & = -\frac{2}{\tau}x^4(t) + \frac{2}{\tau}x(t-\tau) \dot{x}(t-\tau), \\
\mathcal{L}_\G  \mathcal{L}_\F \mathcal{H}(x_t) & = -\frac{2}{\tau}x(t)x(t-\tau). \\
\end{split}
\end{equation}

Now we use Theorem~\ref{thm:safety_delay_extended} to guarantee safety, and consider a min-norm controller by setting ${\K_{\rm des}(x_t,\dot{x}_t)=0}$, choosing ${\alpha_{\rm e}(r) = \gamma_{\rm e} r}$ with $\gamma_{\rm e} > 0$ and
using \eqref{eq:min-norm_extended}.
This yields the controller:
\begin{equation}
\K(x_t,\dot{x}_t)\!=\! 
\begin{cases}
0 & \!\!{\rm if} \; \frac{-2x^4(t)+2x(t-\tau)\dot{x}(t-\tau)+(\gamma+\gamma_{\rm e}) \big(x^2(t-\tau)-x^2(t)\big) + \gamma_{\rm e}\gamma\left(\tau-\int_{-\tau}^0 x_t^2(\vartheta) \, {\rm d} \vartheta \right)}{\tau}
 \geq 0,\\
\frac{-2x^4(t)+2x(t-\tau)\dot{x}(t-\tau)+(\gamma+\gamma_{\rm e}) \big(x^2(t-\tau)-x^2(t)\big)+\gamma_{\rm e}\gamma\left(\tau-\int_{-\tau}^0 x_t^2(\vartheta) \, {\rm d} \vartheta \right)}{2x(t)x(t-\tau)}
& \!\!{\rm otherwise}, 
\end{cases}
\end{equation}
and the closed-loop system:
\begin{equation}\label{eq:closed_loop_scalar3}
\dot{x}(t)= 
\begin{cases}
x^3(t) & {\rm if} \; \frac{-2x^4(t)+2x(t-\tau)\dot{x}(t-\tau)+(\gamma+\gamma_{\rm e}) \big(x^2(t-\tau)-x^2(t)\big) + \gamma_{\rm e}\gamma\left(\tau-\int_{-\tau}^0 x_t^2(\vartheta) \, {\rm d} \vartheta \right)}{\tau}
 \geq 0,\\
\frac{2x(t-\tau)\dot{x}(t-\tau)+(\gamma+\gamma_{\rm e}) \big(x^2(t-\tau)-x^2(t)\big)+\gamma_{\rm e}\gamma\left(\tau-\int_{-\tau}^0 x_t^2(\vartheta) \, {\rm d} \vartheta \right)}{2x(t)}
   & {\rm otherwise}, 
\end{cases}
\end{equation}
which is an integro-differential equation with neutral-type delay term.

Simulation results are presented in the right column of Fig.~\ref{fig:scalar} (panels (c),(f),(i)) for system \eqref{eq:closed_loop_scalar3} with ${\tau=1}$, ${\gamma_{\rm e}=1}$, ${\gamma=3}$ and initial conditions
${x_0(\vartheta)=0.4}$, ${\vartheta \in [-\tau,0]}$ and
${\dot{x}_0(\vartheta)=0}$, ${\vartheta \in [-\tau,0)}$.
Although the state variable is sometimes greater than one, the controller
forces the moving average of the solution (dashed curve) to evolve within the safe set for all time, as desired.

To summarize, if the functional $\mathcal{H}$ depends only on the delay-free state, then the QP-based control law depends only on the state $x_t$ and the closed-loop system is an RFDE.
If there is delay in the functional $\mathcal{H}$, then the control law may depend not only on the state $x_t$, but also on its time derivative $\dot{x}_t$, leading to a nonsmooth NFDE as the closed-loop system.
\end{eexample} 

{\em Case 4:}
Finally, consider the following functional constructed by the combination of point delay and distributed delay terms:
\begin{equation}\label{}
\tilde{\mathcal{H}}(x_t) = 1 + \frac{1}{2}x^2(t-\tau)- \frac{1}{\tau}\int_{-\tau}^0 x^2(t+\vartheta) \, {\rm d} \vartheta.
\end{equation}
Here we use tilde to emphasise that this functional does not satisfy condition I.~in Definition~\ref{def:reldeg_delay} (and hence does not have a valid relative degree).
Taking the time derivative leads to:
\begin{equation}
\dot{\tilde{\mathcal{H}}}(x_t,\dot{x}_t)
= x(t-\tau)\dot{x}(t-\tau) + \frac{1}{\tau}\big( x^2(t-\tau) - x^2(t)\big),
\end{equation}
associated with:
\begin{equation}
\begin{split}
    \mathcal{L}_\F \tilde{\mathcal{H}}(x_t,\dot{x}_t) = &  x(t-\tau)\dot{x}(t-\tau) + \frac{1}{\tau}\big( x^2(t-\tau) - x^2(t)\big), \\
    \mathcal{L}_\G \tilde{\mathcal{H}}(x_t) = & 0,
\end{split}
\end{equation}
cf.~\eqref{eq:Lie_derivatives}.
One can observe that $\mathcal{L}_\G \tilde{\mathcal{H}}(x_t)=0$ similar to Case 3, leading to the extended functional defined by~\eqref{eq:ext_safety_func} as:
\begin{equation}
    \tilde{\mathcal{H}}_{\rm e}(x_t,\dot{x}_t) = x(t-\tau)\dot{x}(t-\tau) + \frac{1}{\tau}\big( x^2(t-\tau) - x^2(t)\big) + \gamma \left( 1 + \frac{1}{2}x^2(t-\tau) - \frac{1}{\tau}\int_{-\tau}^0 x^2(t+\vartheta) \, {\rm d} \vartheta \right).
\end{equation}
Again, taking its time derivative gives:
\begin{equation}
\dot{\tilde{\mathcal{H}}}_{\rm e}(x_t,\dot{x}_t,\ddot{x}_t,u) = 
 x(t-\tau)\ddot{x}(t-\tau) + \dot{x}^2(t-\tau) + 
\frac{2}{\tau}x(t-\tau) \dot{x}(t-\tau) 
-\frac{2}{\tau}x^4(t) - \frac{2}{\tau}x(t)x(t-\tau) u + \gamma x(t-\tau)\dot{x}(t-\tau) + \frac{\gamma}{\tau}\big( x^2(t-\tau) - x^2(t)\big),
\end{equation}
in which the advanced-type term $\ddot{x}(t-\tau)$ appears, as emphasized by the notation $\dot{\tilde{\mathcal{H}}}_{\rm e}(x_t,\dot{x}_t,\ddot{x}_t,u)$.
Synthesizing a QP-based safe controller from this based on~\eqref{eq:QP_delay_extended} (e.g. with desired controller ${\K_{\rm des}(x_t,\dot{x}_t)=0}$ and linear class-$\Keinf$ function) would result in a control law $u=\K(x_t,\dot{x}_t,\ddot{x}_t)$ that depends on $\ddot{x}_t$. 
Then the closed-loop dynamics would become an AFDE.

\newpage
\section{Case-study: regulated delayed predator-prey model}
\label{sec:predatorprey}

\begin{center}
\begin{table}[!b]
\centering
\begin{tabular*}{300pt}{@{\extracolsep\fill}lll@{\extracolsep\fill}}%
\toprule
Description & Parameter & Value \\
\midrule
prey growth rate & $r$ &  1\\
prey self-regulation rate & $a$ & 1  \\
predation rate of the prey &  $p$ &  4  \\
conversion rate of prey into predator  & $b$ &  1.2  \\
predator intraspecific competition & $m$ &  0.1 \\
predator mortality rate &  $d$ & 1 \\
predator maturation time &  $\tau$ & 5 \\
\midrule
prey lower limit &  $ x_{1, {\rm min}}$ &   0.05 \\
prey upper limit &  $ x_{1, {\rm max}}$ &   0.6 \\
\bottomrule
\end{tabular*}
\caption{Parameters of the predator-prey model.}\label{tab:predator_prey_param}%
\end{table}
\end{center}

Finally, we investigate a predator-prey problem which is subject to time delay \cite{macdonald1978,kuang1993delay,stepan1986great}. This application is becoming more and more important considering the fragile ecosystems as a consequence of climate change.
The evolution of predator and prey populations is described in an ecosystem, and human intervention, regarded as control input, is used to control these numbers \cite{lenhart2007optimal}.
We use the proposed safety-critical control framework to regulate the numbers of predators or preys and maintain these numbers within safe bounds.
Naturally, such ecosystems contain significant time delays because each predator and prey has a finite maturation period in its life before they start interacting with each other.
While the underlying delayed dynamics have been analysed extensively during the last few decades by many researchers \cite{ruan2009nonlinear}, to the best of our knowledge, safety-critical control has not yet been applied to address this problem due to the lack of theoretical background that endows time delay systems with provable guarantees of safety.
Now we use this problem to demonstrate that our proposed framework is able to provide the desired safety guarantees for systems with state delays.

We denote the population of the preys by $x_1$ and the population of the predators by $x_2$, and we model their dynamic interaction by the following nondimensionalized system:
\begin{equation} \label{eq:predator_prey}
\begin{split}
\underbrace{
\begin{bmatrix}
\dot{x}_1(t) \\
\dot{x}_2(t)
\end{bmatrix}
}_{
\dot{x}(t)}
=
\underbrace{
\begin{bmatrix}
r x_1(t)-a x_1^2(t) -px_1(t)x_2(t)\\
b p x_1(t-\tau)x_2(t-\tau)-d x_2(t) - m x_2^2(t)
\end{bmatrix}}_{f(x(t),x(t-\tau))}
+
\underbrace{
\begin{bmatrix}
0 \\
1
\end{bmatrix}
}_{g(x(t))}u(t),
\end{split}
\end{equation}
where the time delay $\tau$ indicates the maturation time of the predators, $r$ describes the growth rate of the prey in the absence of predators, $a$ denotes the self-regulation constant of the prey, $p$ describes the predation rate of the prey by predators, $b$ indicates the rate of conversion of consumed prey to predator, $d$ is the specific mortality of predator in the absence of prey, $m$ describes the intraspecific competition among predators \cite{gourley2004stage},
and $u$ quantifies the effect of human intervention affecting the number of predators.

Without control input (${u(t) \equiv 0}$), system \eqref{eq:predator_prey} has four equilibria:
${(x_1, x_2) = (0, 0)}$,
${(x_1, x_2) = (0, -d/m)}$,
${(x_1, x_2) = (r/a, 0)}$ and
${(x_1, x_2) = (mr+pd, bpr-ad)/(am+bp^2)}$,
from which the last one is relevant. It corresponds to a constant population in the ecosystem, which is independent of the time delay. However, its stability depends on the delay, and there exists a critical time delay above which the equilibrium becomes unstable by a supercritical Hopf bifurcation that induces stable periodic solutions \cite{stepan1986great}. In this case, from the biological point of view, the populations of predators and preys are oscillating.

\begin{figure}[h!]
\centerline{\includegraphics[width=0.899\columnwidth]{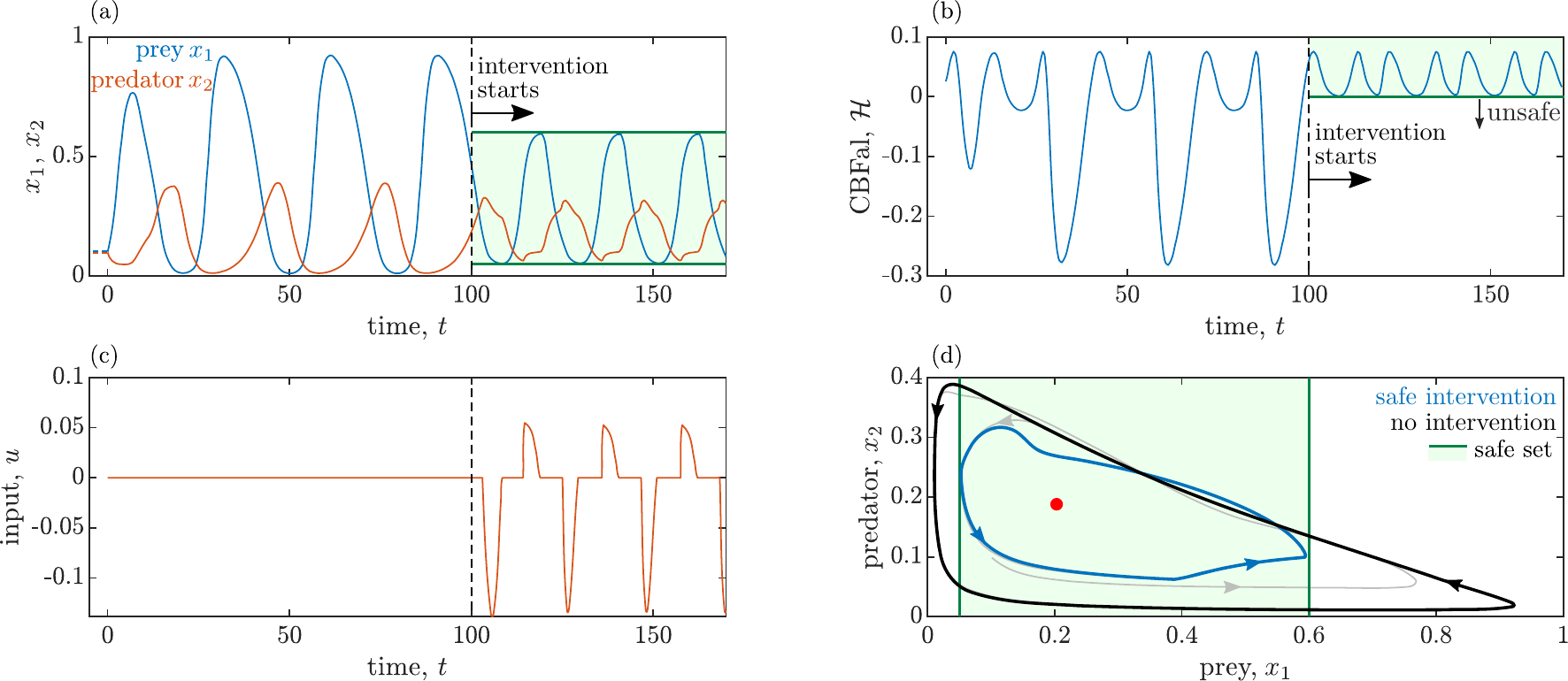}}
\caption{Safety-critical control of the delayed predator-prey model \eqref{eq:predator_prey} to keep the number of preys within limits.}
\label{fig:predator_prey}
\end{figure}

In the literature there exist several methods to regulate the predator-prey system such as the addition of food, pesticide or insecticide  \cite{san1974optimal,lenhart2007optimal}.
In this example, we directly control the number of predators by capturing or releasing them, i.e., the control input $u$ enters the dynamics of the predator population $x_2$.

We seek to synthesize a controller that keeps the ecosystem safe.
Specifically, our aim is to regulate the number $x_1$ of preys and keep their population within prescribed bounds.
On one hand, we specify a lower limit $x_{1, {\rm min}}$ to avoid getting close to the danger of prey extinction.
On the other hand, we include an upper limit $x_{1, {\rm max}}$, to prevent the number of preys from increasing too much, which could result in an unsustainable ecosystem with over-consumption of available food resources.

Thus, we use the candidate control barrier functional:
\begin{equation}
\begin{split}
\mathcal{H}(x_t)&=-\big(x_1(t) - x_{1, {\rm min}}\big)\big(x_1(t) - x_{1, {\rm max}}\big),\\
\end{split}
\end{equation}
and we seek to maintain $\mathcal{H}(x_t)\geq 0$ for all time.
The time derivative of $\mathcal{H}$ along \eqref{eq:predator_prey} reads:
\begin{equation}
\dot{\mathcal{H}}(x_t) = 2 \left( \bar{x}_1 - x_1(t) \right) \left( r x_1(t)-a x_1^2(t) -px_1(t)x_2(t) \right)
\end{equation}
with ${\bar{x}_1 = (x_{1, {\rm min}} + x_{1, {\rm max}}) / 2}$.
Since $\dot{\mathcal{H}}$ only depends on $x_t$ (i.e., the corresponding $\mathcal{L}_\F \mathcal{H}$ does not contain terms of $\dot{x}_t$ and $\mathcal{L}_\G \mathcal{H}(x_t) = 0$), we construct the extended CBFal \eqref{eq:ext_safety_func} with $\alpha(r) = \gamma r$, $\gamma > 0$, that becomes:
\begin{equation}
\mathcal{H}_{\rm e}(x_t) = 2 \left( \bar{x}_1 - x_1(t) \right) \left( r x_1(t)-a x_1^2(t) -px_1(t)x_2(t) \right) - \gamma (x_1(t) - x_{1, {\rm min}})(x_1(t) - x_{1, {\rm max}}).
\end{equation}
Its derivative,
\begin{multline}
\dot{\mathcal{H}}_{\rm e}(x_t,\dot{x}_t,u)
= - 2 \left( r x_1(t)-a x_1^2(t) -px_1(t)x_2(t) \right)^2
+ 2 \left( \bar{x}_1 - x_1(t) \right) \left( r - 2 a x_1(t) -p x_2(t) \right) \left( r x_1(t)-a x_1^2(t) -px_1(t)x_2(t) \right) \\
- 2 p \left( \bar{x}_1 - x_1(t) \right) x_1(t) \left( b p x_1(t-\tau)x_2(t-\tau)-d x_2(t) - m x_2^2(t) + u \right)
+ 2 \gamma \left( \bar{x}_1 - x_1(t) \right) \left( r x_1(t)-a x_1^2(t) -px_1(t)x_2(t) \right),
\end{multline}
depends on the input $u$.
Thereby safety can be enforced by synthesizing controllers according to Theorem~\ref{thm:safety_delay_extended}.
We implement the QP-based controller~\eqref{eq:min-norm_extended}, with desired controller ${\K_{\rm des}(x_t,\dot{x}_t)=0}$ and linear class-$\Keinf$ function
${\alpha_{\rm e}(r) = \gamma_{\rm e} r}$, $\gamma_{\rm e}>0$.

Simulation results are illustrated in Fig.~\ref{fig:predator_prey} for the model parameters are listed in Table~\ref{tab:predator_prey_param}, ${\gamma=1}$, ${\gamma_{\rm e}=1}$ and initial conditions ${x_t(\vartheta)=\begin{bmatrix}0.1 & 0.1\end{bmatrix}^\top}$, ${\vartheta \in [-\tau,0]}$.
First, we demonstrate the natural evolution of system \eqref{eq:predator_prey} without active intervention (${u(t) \equiv 0}$) during the time interval ${t \in [0, 100]}$.
Then, the safety-critical controller is turned on after ${t = 100}$.

Without safety-critical control, the trajectory of the system converges to a stable periodic orbit with large fluctuation in the populations (see black orbit in panel (d)).
Notice that the prey population goes outside the range ${[x_{1, {\rm min}},x_{1, {\rm max}}]}$, and the system periodically leaves the safe set as indicated by the negative values of the functional $\mathcal{H}$ during ${t \in [0, 100]}$.

With safety-critical control, the level of intervention is quantified in panel (c).
To keep the number of preys within the prescribed limits, predators are captured (${u(t)<0}$) when the population of preys is small and predators are added (${u(t)>0}$) when the prey population is too large.
The controller intervenes minimally in the system, only when the prey population is too close to the safe limits, and otherwise lets the predator and prey populations evolve naturally (i.e., ${u(t)=0}$ for some duration of time).
The effect of this intervention on safety is clearly seen: it maintains nonnegative values for the functional $\mathcal{H}$, keeps the number of preys within safe bounds, and forces the system to evolve within the safe set (green shaded areas in panels (a) and (d)).
In fact, the solution converges to a stable limit cycle (blue orbit in panel (d)) located inside the safe set (and grazing the boundaries of the safe set periodically).
Compared to the uncontrolled scenario, the proposed controller modifies the size of the stable limit cycle around the unstable equilibrium without changing these stability properties.
Remarkably, such desired behavior is generated by a systematic controller synthesis procedure based on control barrier functional theory.

\section{Conclusion}\label{sec:conclusion}
In this work, we have discussed the safety of time delay systems that include state delays.
First, we have proposed a method to formally certify the safety of autonomous delayed dynamical systems by means of {\em safety functionals} defined over the infinite dimensional state space.
We have broken down how the required time derivative of the potentially complicated nonlinear safety functional can be expressed.
Second, we have extended this theory to control systems with time delay in order synthesize safety-critical controllers that are endowed with rigorous safety guarantees.
The essence of the proposed method is the introduction of {\em control barrier functionals}
that guide the selection of safe control inputs for time delay systems in a similar fashion to how control barrier functions yield safety for delay-free systems.
We have provided formal safety guarantees with their proofs.
Third, we have also incorporated control barrier functionals into optimization problems to find pointwise optimal safe controllers, and we have demonstrated the applicability of the proposed method on illustrative examples.

The proposed theoretical framework open ways
towards provably safe and reliable operation of time delay systems.
While the paper presents some demonstrative examples, we believe that this approach could be used in a much wider range of application fields.
Our future goals are to implement this method in engineering applications and to perform experiments to validate the theoretical results.

\section*{Acknowledgments}
The research reported in this paper has been supported by the National Research, Development and Innovation Fund (TKP2020 NC, Grant No.\ BME-NCS and TKP2021, Project no.\ BME-NVA-02) under the auspices of the Ministry for Innovation and Technology of Hungary, 
by the National Science Foundation (CPS Award \#1932091), by Aerovironment and by Dow (\#227027AT).

\subsection*{Author contributions}

A. K. Kiss developed the theory and conducted the numerical simulations.
T. G. Molnar supported the theory and helped with the writing.
A. D. Ames and G. Orosz supervised the project, including the development of the theory, results and writing.

\subsection*{Conflict of interest}

The authors declare no potential conflict of interests.

\subsection*{Data availability statement}
Data sharing not applicable to this article as no datasets were generated or analyzed during the current study.

\bibliography{main_JNRC}

\appendix

\section{Derivatives of Functionals}\label{sec:appdx_Gateaux}

This appendix shows the necessary definitions and lemma for the proof of Theorem~\ref{thm:H_func_deriv}.

Recall that for the delay-free system in Section~\ref{sec:delay_free}, the time derivative of function $h$ along the system can be considered as the directional derivative of $h$ along $\dot{x}$. This is generalized to functionals in that the time derivative of functional $\mathcal{H}$ is the directional derivative of $\mathcal{H}$ along $\dot{x}_t$, which is given by the so-called G\^{a}teaux derivative formulated as follows.
\begin{definition}[\textbf{G\^{a}teaux Derivative}]
Functional $\mathcal{H} \colon \B \to \R$ is G\^{a}teaux differentiable at $x_t \in \B$ if there exists a functional $D_{\rm G}\mathcal{H}(x_t) \colon \Q \to \R$ such that $\forall \phi \in \Q$:
\begin{equation}\label{eq:gateaux_definition}
D_{\rm G}\mathcal{H}(x_t)(\phi)=\lim_{\Delta t \to 0} \frac{\mathcal{H}(x_t + \Delta t \phi)-\mathcal{H}(x_t)}{\Delta t} , \quad \Delta t \in \R,
\end{equation}
where $D_{\rm G}\mathcal{H}(x_t)$ is called the \textbf{G\^{a}teaux derivative} of $\mathcal{H}$ at $x_t$, that is evaluated along $\phi$.
\end{definition}
Similarly, the generalization of the gradient $\nabla h$ is the so-called Fr\'{e}chet derivative of $\mathcal{H}$, defined by the implicit form below.
\begin{definition}[\textbf{Fr\'{e}chet derivative}]
Functional $\mathcal{H} \colon \B \to \R$ is Fr\'{e}chet differentiable at $x_t \in \B$ if there exists a bounded linear functional $D_{\rm F}\mathcal{H}(x_t) \colon \Q \to \R$ such that:
\begin{equation}\label{eq:frechet_definition}
    \lim_{\|\phi\| \to 0} \frac{|\mathcal{H}(x_t+\phi) - \mathcal{H}(x_t) - D_{\rm F}\mathcal{H}(x_t)\phi |}{\|\phi\|} = 0, \quad \phi \in \Q,
\end{equation}
where $D_{\rm F}\mathcal{H}(x_t)$ is called the \textbf{Fr\'{e}chet derivative} of $\mathcal{H}$ at $x_t$.
\end{definition}


If $\mathcal{H}$ is Fr\'{e}chet differentiable, it implies that it is also G\^{a}teaux differentiable and directional derivatives exist in all directions.
The following lemma formally establishes this connection between the G\^{a}teaux and Fr\'{e}chet derivatives to prove Theorem~\ref{thm:H_func_deriv}.
\begin{lemma}\cite{andrews2010ricci}\label{lem:gateaux_is_frechet}
\textit{
If $\mathcal{H} \colon \B \to \R$ is Fr\'{e}chet differentiable at $x_t \in \B$, then it is also G\^{a}teaux differentiable at $x_t$, and the G\^{a}teaux derivative $D_{\rm G}\mathcal{H}(x_t) \colon \Q \to \R$ is given by a bounded linear functional that is the  Fr\'{e}chet derivative $D_{\rm F}\mathcal{H}(x_t) \colon \Q \to \R$:
\begin{equation}\label{eq:gateaux_is_frechet}
D_{\rm G}\mathcal{H}(x_t)(\dot{x}_t)=D_{\rm F}\mathcal{H}(x_t)\dot{x}_t, \quad \forall \dot{x}_t \in \Q.
\end{equation}
}
\end{lemma}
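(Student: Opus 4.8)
The plan is to derive the Gâteaux derivative directly from the Fréchet derivative by restricting the Fréchet limit in~\eqref{eq:frechet_definition} to the one-dimensional ray spanned by an arbitrary direction. First I would fix $\dot{x}_t \in \Q$ and test the Fréchet condition on the particular perturbation $\phi = \Delta t\,\dot{x}_t$, where $\Delta t \in \R$ is the scalar that appears in the Gâteaux limit~\eqref{eq:gateaux_definition}. The trivial case $\dot{x}_t = 0$ I would dispose of separately: both the difference quotient and $D_{\rm F}\mathcal{H}(x_t)\dot{x}_t$ vanish by linearity, so~\eqref{eq:gateaux_is_frechet} holds automatically.

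For $\dot{x}_t \neq 0$, the key observation is that $\|\phi\| = |\Delta t|\,\|\dot{x}_t\|$, so $\|\phi\| \to 0$ precisely when $\Delta t \to 0$. Substituting $\phi = \Delta t\,\dot{x}_t$ into~\eqref{eq:frechet_definition} and exploiting the linearity of the bounded functional $D_{\rm F}\mathcal{H}(x_t)$ to pull the scalar out, $D_{\rm F}\mathcal{H}(x_t)(\Delta t\,\dot{x}_t) = \Delta t\, D_{\rm F}\mathcal{H}(x_t)\dot{x}_t$, I would obtain
\begin{equation}
\lim_{\Delta t \to 0} \frac{\big| \mathcal{H}(x_t + \Delta t\,\dot{x}_t) - \mathcal{H}(x_t) - \Delta t\, D_{\rm F}\mathcal{H}(x_t)\dot{x}_t \big|}{|\Delta t|\,\|\dot{x}_t\|} = 0.
\end{equation}
Since $\|\dot{x}_t\|$ is a positive constant, the numerator divided by $|\Delta t|$ also tends to zero, and because convergence of an absolute value to zero forces the signed quantity to zero, the difference quotient converges to $D_{\rm F}\mathcal{H}(x_t)\dot{x}_t$. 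By~\eqref{eq:gateaux_definition} this is exactly the statement that $D_{\rm G}\mathcal{H}(x_t)(\dot{x}_t)$ exists and equals $D_{\rm F}\mathcal{H}(x_t)\dot{x}_t$, as claimed in~\eqref{eq:gateaux_is_frechet}.

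I do not expect a genuine obstacle here, as the result is a standard fact in Banach-space calculus; the only point requiring care is conceptual rather than computational. The Fréchet limit quantifies over all perturbations with $\|\phi\| \to 0$ in $\Q$, whereas the Gâteaux limit probes only the one-parameter family $\{\Delta t\,\dot{x}_t : \Delta t \in \R\}$. The implication goes through precisely because a limit that holds for every vanishing perturbation holds a fortiori along this restricted ray; the reverse implication (Gâteaux $\Rightarrow$ Fréchet) would fail in general, but it is not needed here.
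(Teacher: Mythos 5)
Your argument is correct and complete: fixing a direction $\dot{x}_t$, specializing the Fr\'{e}chet limit to the ray $\phi = \Delta t\,\dot{x}_t$, using $\|\phi\| = |\Delta t|\,\|\dot{x}_t\|$ together with the linearity of $D_{\rm F}\mathcal{H}(x_t)$, and handling $\dot{x}_t = 0$ separately is exactly the standard proof of this fact. The paper itself does not prove the lemma --- it only cites Proposition A.3 of the referenced text --- so your write-up supplies the argument that the citation points to rather than diverging from it. The one thing worth flagging is inherited from the paper's own definitions rather than introduced by you: the Fr\'{e}chet limit in~\eqref{eq:frechet_definition} ranges over $\phi \in \Q$ while $\mathcal{H}$ is only defined on $\B$, so strictly speaking $\mathcal{H}(x_t + \Delta t\,\dot{x}_t)$ requires $\dot{x}_t$ to be an admissible increment (or $\mathcal{H}$ to be extended); your proof is faithful to the paper's conventions on this point and needs no repair.
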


The proof can be found in \cite{andrews2010ricci}, Proposition A.3.
Consequently, if $\mathcal{H}$ is continuously Fr\'{e}chet differentiable, its Fr\'{e}chet and G\^{a}teaux derivatives are continuous linear functionals.
While nonlinear functionals have no general form, continuous linear functionals can be represented in the form of Stieltjes integrals, as provided by the Riesz representation theorem.

\begin{theorem}[\textbf{Riesz Representation Theorem \cite{riesz1955functional,hale1977theory,diekmann2012delay,breda2014stability,Stepan1989}}]\label{thm:riesz}
\textit{
For every  continuous linear functional $\mathcal{L} \colon \B \to \R$ there exists a unique function ${\eta \colon [-\tau,0] \to \R^{1 \times n}}$ that is of bounded variation such that $\forall \phi \in \B$:
\begin{equation}
\mathcal{L}(\phi)
= \int_{-\tau}^{0} {\rm d}_\vartheta \eta(\vartheta) \phi(\vartheta),
\end{equation}
where the integral is a Stieltjes type and ${\rm d}_\vartheta \eta(\vartheta)$ is interpreted as a measure corresponding to the function $\eta$.
}
\end{theorem}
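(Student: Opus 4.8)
The statement is the classical Riesz representation theorem for continuous linear functionals on a space of continuous functions, so the plan is to follow the standard Hahn--Banach/step-function route, adapted to the vector-valued setting used here. The argument splits into building a candidate $\eta$, verifying bounded variation, proving the integral identity by approximation, and finally pinning down uniqueness by normalization.

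First I would reduce to scalar integrands. Writing $\phi = \sum_{k=1}^n \phi_k\, e_k$ with the standard basis $e_k \in \R^n$, linearity gives $\mathcal{L}(\phi) = \sum_k \mathcal{L}(\phi_k e_k)$, so it suffices to represent each scalar functional $\psi \mapsto \mathcal{L}(\psi\, e_k)$ on $\mathcal{C}([-\tau,0],\R)$ by a scalar bounded-variation function $\eta_k$ and then set $\eta = (\eta_1,\dots,\eta_n)$. Since $\B$ carries the sup norm and is dense in $\mathcal{C}$ (piecewise-linear functions lie in $\B$ and approximate any continuous function uniformly), the bounded functional $\mathcal{L}$ extends uniquely and continuously from $\B$ to $\mathcal{C}$ with the same norm, so from here I work on $\mathcal{C}$ and it is enough to establish the representation for every $\phi \in \mathcal{C}$.

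Next I would build $\eta$. By the Hahn--Banach theorem, extend the scalar functional to a bounded linear functional $\tilde{\mathcal{L}}$ on the space $B([-\tau,0],\R)$ of \emph{all} bounded functions with the sup norm, preserving $\|\mathcal{L}\|$, and define $\eta_k(\vartheta) = \tilde{\mathcal{L}}\big(\chi_{(-\tau,\vartheta]}\, e_k\big)$ via indicators of half-open intervals, so that $\eta_k(-\tau)=0$ automatically. To see $\eta_k$ is of bounded variation, take any partition $-\tau = \vartheta_0 < \cdots < \vartheta_N = 0$, put $\epsilon_i = \mathrm{sgn}\big(\eta_k(\vartheta_i)-\eta_k(\vartheta_{i-1})\big)$, and test $\tilde{\mathcal{L}}$ against the step function $s = \sum_i \epsilon_i\, \chi_{(\vartheta_{i-1},\vartheta_i]}$, which has sup norm $1$. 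Since $\chi_{(\vartheta_{i-1},\vartheta_i]} = \chi_{(-\tau,\vartheta_i]} - \chi_{(-\tau,\vartheta_{i-1}]}$, one gets $\tilde{\mathcal{L}}(s) = \sum_i |\eta_k(\vartheta_i)-\eta_k(\vartheta_{i-1})| \leq \|\mathcal{L}\|$, uniformly in the partition, so the total variation is bounded by $\|\mathcal{L}\|$. The integral identity then follows by approximation: for continuous $\phi$ and a partition of mesh $\delta$, the step function $\phi_\delta = \sum_i \phi(\vartheta_i)\, \chi_{(\vartheta_{i-1},\vartheta_i]}$ satisfies $\|\phi-\phi_\delta\| \to 0$ by uniform continuity, while $\tilde{\mathcal{L}}(\phi_\delta)$ is \emph{exactly} the Riemann--Stieltjes sum $\sum_i \phi(\vartheta_i)\,[\eta_k(\vartheta_i)-\eta_k(\vartheta_{i-1})]$. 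Letting $\delta \to 0$, the left side tends to $\tilde{\mathcal{L}}(\phi)=\mathcal{L}(\phi)$ by continuity, and the right side converges to $\int_{-\tau}^0 \mathrm{d}_\vartheta \eta_k(\vartheta)\,\phi(\vartheta)$ because $\phi$ is continuous and $\eta_k$ is of bounded variation; reassembling the $n$ components gives the claimed Stieltjes representation.

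The main obstacle is uniqueness, because the Hahn--Banach extension is not unique and therefore neither is the raw $\eta$. I would resolve this by normalizing, for instance requiring $\eta(-\tau)=0$ (already arranged above) together with right-continuity on $(-\tau,0)$, and then arguing that if two normalized bounded-variation functions induce the same Stieltjes integral against every continuous $\phi$ they must coincide, by testing against continuous approximations of indicator functions. The delicate point is precisely that the value of the integral against \emph{continuous} integrands is insensitive to the behaviour of the integrator at its jump points and at the endpoints; verifying this insensitivity is what makes the normalized $\eta$ well defined and unique, and I expect it to be the part requiring the most care.
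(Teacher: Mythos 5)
Your proposal is a correct outline of the classical Hahn--Banach/step-function proof of the Riesz representation theorem, which is exactly the standard argument that the paper defers to: the paper does not prove Theorem~\ref{thm:riesz} itself but simply cites the literature (Riesz--Nagy et al.), so your sketch fills in what the paper leaves as a reference. Your observation that uniqueness only holds after normalizing $\eta$ (e.g.\ $\eta(-\tau)=0$ and right-continuity at interior points) is a genuine point of care that the paper's statement glosses over; the only cosmetic detail to patch is that the first indicator should include the left endpoint $-\tau$ so that the step approximants $\phi_\delta$ actually converge to $\phi$ in the sup norm of the space of bounded functions.
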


The proof can be found in \cite{riesz1955functional}.
Accordingly, the Fr\'{e}chet derivative of a continuously Fr\'{e}chet differentiable functional ${\mathcal{H} \colon \B \to \R}$ is a continuous linear functional, that can be represented as follows when evaluated along $\dot{x}_t$:
\begin{equation}\label{eq:Riesz_rep}
D_{\rm F}\mathcal{H}(x_t)\dot{x}_t = \int_{-\tau}^{0} {\rm d}_\vartheta \eta(x_t,\vartheta) \dot{x}_t(\vartheta).
\end{equation}
Theorem~\ref{thm:H_func_deriv} is therefore a direct consequence of Lemma~\ref{lem:gateaux_is_frechet} and Theorem~\ref{thm:riesz}.
\section{Example functional with double integral}\label{sec:appdx_double_int}

Let us consider the system \eqref{eq:Aut_FDE} with the functional $\mathcal{H}$ defined as:
\begin{equation}\label{eq:functional}
\begin{split}
 \mathcal{H}(x_t) = h\bigg(  x_t(0), \int_{-\tau}^0 \rho(\vartheta)\, \kappa \big( x_t(\vartheta) \big)  {\rm d} \vartheta,
\int_{-\tau}^0 \int_{-\tau}^0 \omega( \vartheta,\chi)\, \Big( \mu \big( x_t(\vartheta) \big) \circ \nu \big( x_t(\chi) \big) \Big) {\rm d}\vartheta {\rm d}\chi  \bigg),
\end{split}
\end{equation}
cf.~\eqref{eq:safe_funnal_gen_exa}, where ${h \colon \R^n \times \R^n \times \R^n \to \R}$,  ${\kappa,\mu,\nu \colon \R^n \to \R^n}$ are nonlinear functions, $\circ$ refers to element-wise multiplication, while the density functions ${\rho \colon [-\tau,0] \to \R^{n\times n}}$ and ${\omega \colon [-\tau,0] \times [-\tau,0] \to \R^{n\times n}}$ are assumed to be continuously differentiable and these functions allow us take into account the states between time moments $t-\tau$ and $t$. 

One may take the time derivative of $\mathcal{H}$ in \eqref{eq:functional}, substitute \eqref{eq:state_derivative}, and use
\begin{align}
\begin{split}
\frac{\partial}{\partial t} \kappa \big( x_t(\vartheta) \big) & = \frac{\partial}{\partial \vartheta} \kappa \big( x_t(\vartheta) \big), \\
\frac{\partial}{\partial t} \mu \big( x_t(\vartheta) \big) & = \frac{\partial}{\partial \vartheta} \mu \big( x_t(\vartheta) \big), \\
\frac{\partial}{\partial t} \nu \big( x_t(\chi) \big) & = \frac{\partial}{\partial \chi} \nu \big( x_t(\chi) \big).
\end{split}
\end{align}
By executing partial integration, these steps yield:
\begin{equation}\label{eq:functionalder}
\begin{split}
\dot{\mathcal{H}}(x_t,\dot{x}_t) &= \nabla_1 h(\dots)  \F(x_t)
\\
&+\nabla_2 h(\dots)  \bigg( \rho( 0) \kappa \big( x(t) \big)  
- \rho( -\tau) \kappa \big( x(t-\tau) \big) 
- \int_{-\tau}^0
\rho'(\vartheta)\,
\kappa \big( x(t+\vartheta) \big)  {\rm d} \vartheta \bigg)
\\
&+\nabla_3 h(\dots)  \bigg( \int_{-\tau}^0 \omega( 0,\chi)\, \Big( \mu \big( x(t) \big) \circ \nu \big(x(t+\chi) \big) \Big) 
- \omega( -\tau,\chi)\, \Big( \mu \big( x(t-\tau) \big) \circ  \nu \big(x(t+\chi) \big) \Big) {\rm d}\chi
\\
&\qquad \quad 
+ \int_{-\tau}^0 \omega( \vartheta,0)\, \Big( \mu \big( x(t+\vartheta) \big) \circ \nu \big(x(t) \big) \Big) 
- \omega( \vartheta,-\tau)\, \Big( \mu \big( x(t+\vartheta) \big) \circ \nu \big(x(t-\tau) \big) \Big) {\rm d}\vartheta
\\
&\qquad \quad -\int_{-\tau}^0 \int_{-\tau}^0 
\Big(
\frac{\partial}{\partial \vartheta}
\omega( \vartheta,\chi)  +
\frac{\partial}{\partial \vartheta}
\omega( \vartheta,\chi) \Big)
\Big( \mu \big( x(t+\vartheta) \big) \circ \nu \big(x(t+\chi) \big) \Big) {\rm d}\vartheta {\rm d}\chi  \bigg),
\end{split}
\end{equation}
cf.~\eqref{eq:safe_funnal_der_exa}, where $\nabla_j$ represents the gradient with respect to the $j$-th vector-valued variable, and $(\dots)$ is a shorthand notation for evaluation at the argument of $h$ as in \eqref{eq:functional}.
Note that terms of $\dot{x}_t$ drop from $\dot{\mathcal{H}}$ similar to the case in Remark~\ref{rem:simplification}.

\section{KKT Conditions}\label{sec:appdx_KKT}

Here, we briefly discuss the derivation steps for determining the solution \eqref{eq:min-norm} to the quadratic program~\eqref{eq:QP_delay} in Corollary~\ref{cor:min_norm}.

Let us define $\Delta \K \colon \B \times \Q \to \R^m$,
${\Delta \K(x_t,\dot{x}_t) = \K(x_t,\dot{x}_t) - \K_{\rm des}(x_t,\dot{x}_t)}$
and consider the expressions of $\dot{\mathcal{H}}$ in~\eqref{eq:H_Lie_deriv} and $\phi$, $\phi_{0}$ in Corollary~\ref{cor:min_norm}.
Then, we can restate~\eqref{eq:QP_delay} as
\begin{align}\label{eq:QP_proof}
\begin{split}
\K(x_t,\dot{x}_t) & = \K_{\rm des}(x_t,\dot{x}_t) + \Delta \K(x_t,\dot{x}_t), \\
\Delta \K(x_t,\dot{x}_t) & =
\underset{\Delta u \in \R^m}{\operatorname{argmin}} \quad \frac{1}{2}\| \Delta u \|_2^2  \\
& \qquad\quad \mathrm{s.t.} \quad \phi(x_t, \dot{x}_t) + \phi_{0}(x_t,\dot{x}_t) \Delta u \geq 0.
\end{split}
\end{align}


In order to solve \eqref{eq:QP_proof}, let us define the \textit{Lagrangian} $L \colon \R^m \times \B \times \Q \to \R$ associated with the optimization problem \eqref{eq:QP_proof} as:
\begin{equation}
L(\Delta u, x_t, \dot{x}_t) = 
\|\Delta u(x_t,\dot{x}_t) \|_2^2- \mu(x_t,\dot{x}_t) \big(\phi(x_t,\dot{x}_t) + \phi_{0}(x_t,\dot{x}_t) \Delta u \big),
\end{equation}
where $\mu \colon \B \times \Q \to \R$ is the Lagrange multiplier associated with the inequality constraint.
This optimization problem has convex objective and affine constraint, hence the {\em Karush-Kuhn-Tucker (KKT) conditions}~\cite{Boyd2004} provide the necessary and sufficient conditions for optimality, listed as
\begin{align}
& \mu(x_t,\dot{x}_t) \geq 0, & \text{Dual Feasibility}
\label{eq:kkt_dual_feasibility} \\
& \Delta \K(x_t,\dot{x}_t) = \mu(x_t,\dot{x}_t) \phi_{0}^\top(x_t,\dot{x}_t), & \text{Stationary}
\label{eq:kkt_stationary} \\
& \phi(x_t,\dot{x}_t) + \phi_{0}(x_t,\dot{x}_t) \Delta \K(x_t,\dot{x}_t) \geq 0, & \text{Primal Feasibility}
\label{eq:kkt_primal_feasibility} \\
& \mu(x_t,\dot{x}_t) \big(\phi(x_t,\dot{x}_t) + \phi_{0}(x_t,\dot{x}_t) \Delta \K(x_t,\dot{x}_t) \big) = 0. & \text{Complementary Slackness}
\label{eq:kkt_complementary_slackness}
\end{align}

We decompose the dual feasibility condition~\eqref{eq:kkt_dual_feasibility} into two cases: ${\mu(x_t,\dot{x}_t) = 0}$ and ${\mu(x_t,\dot{x}_t) > 0}$.
If $\mu(x_t,\dot{x}_t) = 0$, then the stationary condition~\eqref{eq:kkt_stationary} leads to:
\begin{equation}
\Delta \K(x_t,\dot{x}_t) = 0,
\end{equation}
while the primal feasibility condition~\eqref{eq:kkt_primal_feasibility} implies $\phi(x_t,\dot{x}_t) \geq 0$.
If  ${\mu(x_t,\dot{x}_t) > 0}$, then the complementary slackness condition \eqref{eq:kkt_complementary_slackness} yields:
\begin{equation}
\phi(x_t,\dot{x}_t) + \phi_{0}(x_t,\dot{x}_t) \Delta \K(x_t,\dot{x}_t) = 0,
\end{equation}
and we can express $\Delta \K$ as:
\begin{equation}
\Delta \K(x_t,\dot{x}_t) = -\frac{\phi(x_t,\dot{x}_t) \phi_0^\top(x_t,\dot{x}_t)}{\phi_0(x_t,\dot{x}_t) \phi_0^\top(x_t,\dot{x}_t)}.
\end{equation}
With the stationary condition \eqref{eq:kkt_stationary} this also means $\mu(x_t,\dot{x}_t) = -\phi(x_t,\dot{x}_t)/(\phi_0(x_t,\dot{x}_t) \phi_0^\top(x_t,\dot{x}_t))$ that implies $\phi(x_t,\dot{x}_t) < 0$ since ${\mu(x_t,\dot{x}_t) > 0}$.

Then the closed-form solution of the quadratic program~\eqref{eq:QP_delay} can be written as:
\begin{equation}\label{eq:QP_solu}
\K(x_t,\dot{x}_t)= 
\begin{cases}
\K_{\rm des}(x_t,\dot{x}_t) & {\rm if} \; \phi(x_t,\dot{x}_t) \geq 0,\\
\K_{\rm des}(x_t,\dot{x}_t) -\frac{\phi(x_t,\dot{x}_t) \phi_0^\top(x_t,\dot{x}_t)}{\phi_0(x_t,\dot{x}_t) \phi_0^\top(x_t,\dot{x}_t)}  & {\rm otherwise},
\end{cases}
\end{equation}
cf.~\eqref{eq:min-norm}.
Note that the CBFal condition \eqref{eq:CBFal_condition} is equivalent to $\mathcal{L}_\G \mathcal{H}(x_t,\dot{x}_t) = 0 \implies \mathcal{L}_\F \mathcal{H}(x_t,\dot{x}_t) + \alpha \big( \mathcal{H}(x_t) \big) > 0$, and we have $\phi_0(x_t,\dot{x}_t) = 0 \implies \phi(x_t,\dot{x}_t) > 0$.
Hence division by zero cannot occur in \eqref{eq:QP_solu} if $\mathcal{H}$ is a CBFal.





\end{document}